\theoremstyle{plain}
\newtheorem{theorem}{Theorem}
\newtheorem{proposition}{Proposition}
\newtheorem{lemma}{Lemma}
\newtheorem{definition}{Definition}
\newtheorem{corollary}{Corollary}
\newtheorem{setting}{Setting}
\crefname{theorem}{Theorem}{Theorems}
\crefname{proposition}{Proposition}{Propositions}
\crefname{lemma}{Lemma}{Lemmas}
\crefname{definition}{Definition}{Definitions}
\crefname{corollary}{Corollary}{Corollaries}
\crefname{setting}{Setting}{Settings}
\crefname{table}{Table}{Tables}
\crefname{equation}{Eq.}{Eqs.} 
\crefname{section}{Section}{Sections}
\DeclareMathOperator*{\argmin}{argmin}
\title{Quantum Communication Complexity of Regularized Linear Regression Protocols\footnote{Accepted for publication in IEEE Transactions on Quantum Engineering.

DOI: 10.1109/TQE.2026.3687237}}
\author{
    Sayaki Matsushita\thanks{sayaki-m@nagoya-u.jp}
}
\affil{Department of Mathematical Informatics, Graduate School of Informatics, Nagoya University, Japan}
\date{}
\begin{document}

\maketitle

\begin{abstract}
Linear regression is fundamental to statistical analysis and machine learning, but its application to large-scale datasets necessitates distributed computing. The problem also arises in quantum computing, where handling extensive data requires distributed approaches. This paper investigates distributed linear regression in the quantum coordinator model.

Building upon the distributed quantum least squares protocol developed by Montanaro and Shao \cite{MS}, I propose improved and extended quantum protocols for solving both ordinary (unregularized) and L2-regularized (Tikhonov) least squares problems. For ordinary least squares methods, my protocol reduces the quantum communication complexity compared to the previous protocol. In particular, this yields a quadratic improvement in the number of digits of precision required for the generated quantum states. This improvement is achieved by incorporating advanced techniques such as branch marking and branch-marked gapped phase estimation developed by Low and Su \cite{LS}. 

Furthermore, I establish a setting for the $\ell_2$-regularized least squares problem specifically in the quantum coordinator model and derive its quantum communication complexity. I analyze the effect of regularization parameters on the quantum communication complexity. 
\end{abstract}

\section{Introduction}\label{section:introduction}
Linear regression, which models the linear relationship between variables, is widely used in statistical analysis and machine learning. The least squares method is a foundational approach in linear regression.

I now introduce the least squares problem. For $m$ data pairs $(\mathbf{a}_1, b_1), \cdots, (\mathbf{a}_m, b_m)$, where each $\mathbf{a}_i\in\mathbb{R}^n$ represents $n$ variables and $ b_i\in \mathbb{R}$ is the corresponding response variable, the least squares method aims to find a vector $\mathbf{x}\in \mathbb{R}^n$ that minimizes the sum of squared differences:
\begin{equation}
    \sum_{i=1}^m (\mathbf{a}_i^T\mathbf{x} - b_i)^2.
\end{equation}
To express this minimization in a more compact matrix form, I introduce the matrix $A\in\mathbb{R}^{m\times n}$ and a vector $\mathbf{b}\in\mathbb{R}^m$. The matrix $A$ is formed by stacking the transposed feature vectors such that its $i$-th row is $\mathbf{a}_i^T$, meaning $A = (\mathbf{a}_1, \cdots, \mathbf{a}_m)^T$. The vector $\mathbf{b}$ is a column vector $\mathbf{b} = (b_1, \cdots, b_m)^T$. With these definitions, the minimization term can be rewritten as:
\begin{equation}\label{eq:ls_obj}
    \|A\mathbf{x}-\mathbf{b}\|^2=\sum_{i=1}^m (\mathbf{a}_i^T\mathbf{x} - b_i)^2.
\end{equation}

The computational infeasibility of processing large-scale datasets on a single machine necessitates distributed approaches, where data is distributed across multiple nodes and processed through internode communication.

Vempala, Wang, and Woodruff \cite{VWW} investigated least squares regression in distributed settings, including coordinator model. The coordinator model involves multiple parties and a single referee, where each party communicates solely with the referee, and direct communication between parties is not permitted.

\begin{setting}[Coordinator Model for Linear Regression, \cite{VWW}]\label{setting:cls}
Suppose that there are $r$ parties $P_0, \cdots, P_{r-1}$, and each party $P_i$ has a matrix $A_i\in\mathbb{R}^{m_i\times n}$ and a vector $\mathbf{b}_i\in \mathbb{R}^{m_i}$. Let a matrix $A\in\mathbb{R}^{m\times n}$ and a vector $\mathbf{b}\in\mathbb{R}^m$ be
\begin{equation}
    A:=\begin{pmatrix}
    A_0\\
    \vdots\\
    A_{r-1}
\end{pmatrix},\  \mathbf{b}:=\begin{pmatrix}
    \mathbf{b}_0\\
    \vdots\\
    \mathbf{b}_{r-1}
\end{pmatrix},
\end{equation}
where $m=\sum_{i=0}^{r-1}m_i$.

Suppose there is a single referee, and each party can make classical communication with the referee. 
Then the goal is for the referee to obtain an $\varepsilon$-approximation of the solution vector:
\begin{equation}\label{eq:x_opt}
\mathbf{x}_\mathrm{opt}=\argmin_\mathbf{x} \|A\mathbf{x}-\mathbf{b}\|.
\end{equation}
\end{setting}

According to \cite{VWW}, the communication complexity of the least squares regression in the coordinator model is $\Tilde{\Omega}(rn)$.

Quantum algorithms are also emerging to solve the least squares problem. A seminal contribution is the HHL algorithm, introduced by Harrow, Hassidim, and Lloyd \cite{HHL}, which was originally designed to solve linear systems ($A\mathbf{x}=\mathbf{b}$) by preparing a quantum state proportional to the solution vector $\mathbf{x}$. This algorithm can be adapted to find the least squares solution $\mathbf{x}_\mathrm{opt}:= \argmin_{\mathbf{x}}\|A\mathbf{x}-\mathbf{b}\|$ by efficiently generating a quantum state $\ket{\mathbf{x}_\mathrm{opt}}=\mathbf{x}_\mathrm{opt}/\|\mathbf{x}_\mathrm{opt}\|$. Subsequent improvements for generating $\ket{\mathbf{x}_\mathrm{opt}}$ have also been made.
Ambainis \cite{ambainis} developed variable-time amplitude amplification (VTAA) and improved the dependency of the conditional number. Childs, Kothari, and Somma \cite{CKS} proposed methods for approximating $A^{-1}$ using either a Fourier series or Chebyshev polynomials, and  achieved an exponential improvement in precision dependence. Chakraborty, Gily\'en, and Jeffery \cite{CGJ} incorporated block-encoding technique and enabled quantum least squares regression for general matrices beyond sparse matrices. Recently, Low and Su \cite{LS} proposed advanced techniques such as Branch Marking and branch-marked Gapped Phase Estimation, further enhancing the efficiency of least squares regression.

In the context of near-term hardware, several studies have experimentally implemented the HHL algorithm using a limited number of qubits \cite{DSDRBP}.
Additionally, approaches for performing least squares regression using quantum annealers have been proposed in \cite{VMAKD} and \cite{DP}.

Similar to classical methods, quantum computation also explores distributed data settings.

Montanaro and Shao \cite{MS} introduced a quantum setting as a quantum analog of the coordinator model \cite{VWW}, establishing a specific distributed quantum computing setting for linear regression. 

In this setting, quantum communication means that parties and the referee exchange qubits.
\begin{setting}[Quantum Coordinator Model for Linear Regression, \cite{MS}]\label{setting:ls}
Suppose that there are $r$ parties $P_0, \cdots, P_{r-1}$, and each party $P_i$ has a matrix $A_i\in\mathbb{R}^{m_i\times n}$ and a vector $\mathbf{b}_i\in \mathbb{R}^{m_i}$. Let a matrix $A\in\mathbb{R}^{m\times n}$ and a vector $\mathbf{b}\in\mathbb{R}^m$ be
\begin{equation}\label{eq:qcm_of_A_b}
    A:=\begin{pmatrix}
    A_0\\
    \vdots\\
    A_{r-1}
\end{pmatrix},\  \mathbf{b}:=\begin{pmatrix}
    \mathbf{b}_0\\
    \vdots\\
    \mathbf{b}_{r-1}
\end{pmatrix},
\end{equation}
where $m=\sum_{i=0}^{r-1}m_i$.

Suppose there is a single referee, and each party can make quantum communication with the referee. Suppose that the referee has prior knowledge of $\delta$, which represents the lower bound of the minimum non-zero singular value of $A$.
Then the goal is for the referee to generate an $\varepsilon$-approximation of the quantum state $\ket{\mathbf{x}_\mathrm{opt}}:=\mathbf{x}_\mathrm{opt}/\|\mathbf{x}_\mathrm{opt}\|$, where $\mathbf{x}_\mathrm{opt}$ is defined as \cref{eq:x_opt}.
\end{setting}

In most real-world scenarios, the referee generally cannot know the value of $\delta$ in advance. 
Practically, I treat $\delta$ as a cutoff threshold and the protocol addresses this by ignoring singular values less than $\delta$. This means the protocol treats the matrix $A$ as $A'=\sum_{k\in \{k|\sigma_k\geq \delta\}}\sigma_k\ket{x_k}\bra{y_k}$, where the singular value decomposition of $A$ is $A=\sum_k\sigma_k\ket{x_k}\bra{y_k}$.

Montanaro and Shao \cite{MS} developed a quantum protocol for \cref{setting:ls} and calculated its quantum communication complexity. However, their protocol is based on Childs, Kothari, and Somma's gapped phase estimation \cite{CKS}, which leaves room for improving the precision dependence. Specifically, their protocol uses an approximation of $\exp(i\pi A)$ to get eigenvalue information, and this approximation leads to suboptimal precision dependence.

In this work, I propose a protocol that reduces quantum communication complexity compared to the approach by Montanaro and Shao \cite{MS}. This advancement is achieved by incorporating Branch Marking and branch-marked Gapped Phase Estimation developed in \cite{LS}. I adapt their underlying framework, Quantum Signal Processing (QSP), so that it can be executed in the quantum coordinator model. Furthermore, the algorithm of \cite{LS} assumes prior access to a constant-factor estimate of $\|A^{+}\mathbf{b}\|$, while it is unavailable in my distributed setting. To address this, my protocol modifies the execution order of subalgorithms from \cite{LS}.

Notably, my protocol achieves a quadratic improvement in the number of digits of precision (i.e., $\log(1/\varepsilon)$) of the generated quantum state relative to prior methods. To illustrate this improvement practically, I set the precision to $\varepsilon=10^{-6}$. The protocol of \cite{MS} requires a precision-dependent overhead $\log^2(1/\varepsilon)\approx 191$ , whereas my protocol requires $\log(1/\varepsilon)\approx 13.8$.

The least squares method I have discussed thus far, aiming to minimize $\|A\mathbf{x}-\mathbf{b}\|^2$, is also known as Ordinary Least Squares. While the ordinary least squares method is foundational, its practical application often encounters challenges such as overfitting and multicollinearity.  Regularization techniques, which introduce additional penalties to the model-fitting process, are used to mitigate these issues. Ridge regression is an example of such a regularization technique \cite{MPVtext, JWHTtext}. It introduces an $\ell_2$-norm penalty to the objective function of the ordinary least squares:
\begin{equation}\label{eq:ridge_obj}
    \mathcal{L}_\mathrm{ridge}(\mathbf{x})=\|A\mathbf{x}-\mathbf{b}\|^2+\lambda\|\mathbf{x}\|^2,
\end{equation}
for a hyperparameter $\lambda>0$. 

General $\ell_2$-regularization (or Tikhonov regularization) is a generalization of the ridge regression. This method incorporates a penalty term defined by a full-rank penalty matrix $L\in\mathbb{R}^{n\times n}$, leading to the following objective function:
\begin{equation}\label{eq:l2_obj}
    \mathcal{L}_\mathrm{l2}(\mathbf{x})=\|A\mathbf{x}-\mathbf{b}\|^2+\lambda\|L\mathbf{x}\|^2.
\end{equation}
Then the goal is to calculate $\mathbf{x}_\mathrm{l2}=\argmin_\mathbf{x}\mathcal{L}_\mathrm{l2}(\mathbf{x})$.

I define a specific setting for performing $\ell_2$-regularized least squares within the quantum coordinator model. This setting extends \cref{setting:ls} by incorporating the regularization parameters directly into the referee's knowledge:
\begin{setting}[Quantum Coordinator Model for $\ell_2$-regularized Linear Regression]\label{setting:l2reg}
Suppose there are $r$ parties $P_0, \cdots, P_{r-1}$, and each party $P_i$ has a matrix $A_i\in\mathbb{R}^{m_i\times n}$ and a vector $\mathbf{b}_i\in \mathbb{R}^{m_i}$. Let
\begin{equation}
    A=\begin{pmatrix}
    A_0\\
    \vdots\\
    A_{r-1}
\end{pmatrix},\  \mathbf{b}=\begin{pmatrix}
    \mathbf{b}_0\\
    \vdots\\
    \mathbf{b}_{r-1}
\end{pmatrix}.
\end{equation}
Suppose there is a single referee with a hyperparameter $\lambda>0$ and a penalty matrix $L\in \mathbb{R}^{n\times n}$. Suppose each party can make quantum communication with the referee.
Let
\begin{equation}\label{eq:x_l2}
    \mathbf{x}_\mathrm{l2}:=\argmin_{\mathbf{x}}(\|A\mathbf{x}-\mathbf{b}\|^2+\lambda\|L\mathbf{x}\|^2).
\end{equation}
Then the goal is for the referee to generate an $\varepsilon$-approximation of the quantum state $\ket{\mathbf{x}_\mathrm{l2}}:=\mathbf{x}_\mathrm{l2}/\|\mathbf{x}_\mathrm{l2}\|$.
\end{setting}

I develop a quantum protocol for \cref{setting:l2reg} and derive its quantum communication complexity. 
While $\ell_2$-regularization can be incorporated straightforwardly in the single-machine setting, its formulation in a distributed quantum setting requires careful consideration. Furthermore, the parameters $L$ and $\lambda$ nontrivially affect the quantum communication complexity because the augmented matrix constructed from $A$ generally has a diffirent condition number than that of $A$ itself.

\subsection{Related Works}
Quantum algorithms for ridge regression for a single quantum computer have been proposed in \cite{CMP}.
I also note that there exists research on quantum federated learning (QFL) for linear regression \cite{YGL}. While QFL also addresses distributed settings, it focuses on preventing data leakage and uses gradient descent methods to approximate the solution vector $\mathbf{x}_\mathrm{opt}$, resulting in a communication complexity that is polynomial in the dimension $n$. In contrast, my protocol aims to generate the quantum state $\ket{\mathbf{x}_\mathrm{opt}}$ with an exponentially improved communication complexity that is polylogarithmic in $n$.

This paper is organized as follows: \cref{section:preliminaries} provides an overview of the necessary prerequisites and foundational tools, including block encoding, quantum signal processing, and variable-time amplitude amplification. Based on these foundations, \cref{section:quantum_protocol} details my quantum protocol for the ordinary distributed linear regression problem of \cref{setting:ls} and analyzes its quantum communication complexity. \cref{section:l_2-regularization} extends this work by presenting my quantum protocol for the $\ell_2$-regularized linear regression problem in \cref{setting:l2reg}, and similarly derives its quantum communication complexity. \cref{section:conclusion} summarizes my contributions. \cref{section:future_work} discusses potential avenues for future research. \hyperref[section:appendix]{Appendix} analyzes the quantum communication complexity of executing the Gapped Phase Estimation (GPE) protocol in \cite{MS} and provides a brief proof for the quantum communication complexity of \cite{MS}'s protocol as presented in \cref{table:qc_ls}.

\section{Preliminaries}\label{section:preliminaries}
    To facilitate the understanding of my quantum protocol, I first review some essential concepts.
Hereafter, $\mathcal{H}_\mathsf{C}$ denotes the Hilbert space corresponding to the register $\mathsf{C}$. $X$, $Y$ and $Z$ denote Pauli matrices, $I$ denotes the identity matrix, and $H$ denotes the Hadamard matrix. Let $Op(a,b,c,d) := a I + i b Z + i c X + i d Y$ denote a linear combination of Pauli operators with scalar coefficients.
For any matrix $A$, $A^+$ denotes its Moore-Penrose pseudoinverse, and define $\Bar{A}$ by $\Bar{A}=\begin{pmatrix}
    0 & A\\
    A^\dagger & 0
\end{pmatrix}$. The $Z$-axis rotation on the Bloch sphere, $R_Z(\theta)$, is defined as $R_Z(\theta):=\exp(-i(\theta/2)Z)$.

\subsection{Block Encoding}
Only unitary operations can be directly applied to quantum states in quantum computation, so non-unitary classical matrices cannot be directly applied.  
Block Encoding provides a powerful framework for handling non-unitary matrices in quantum algorithms. The basic idea is to embed a non-unitary matrix into a larger unitary matrix, making it suitable for quantum computation.
In my protocol, Block encoding is used to handle the classical data matrix $A$ and its extended forms. 

\begin{definition}[Block-Encoding, {\cite[Definition 43]{GSLW}}]\label{definition:be}
Let $A$ be an $s$-qubit operator, $\alpha>0, \varepsilon\geq 0$  and $a\in\mathbb{N}$. An $(s+a)$-qubit unitary operator $U$ is an $(\alpha,a,\varepsilon)$-block-encoding of $A$, if
\begin{equation}\|A-\alpha\bra{0}^{\otimes a}U\ket{0}^{\otimes a}\|\leq \varepsilon.\end{equation}
\end{definition}

Suppose $U$ is an $(\alpha, a,\varepsilon)$-block-encoding of $A$ as per \cref{definition:be}. Then, the $(s+a+1)$-qubit unitary operator $\Bar{U}$ is Hermitian and an $(\alpha, a,\varepsilon)$-block-encoding of $\Bar{A}$.

\subsection{Quantum Signal Processing}
Quantum Signal Processing (QSP) offers a powerful framework for applying polynomial functions to the eigenphases of a given unitary operator. To execute quantum least squares regression, efficiently extracting information about the singular values of the data matrix $A$ is crucial. By implementing a walk operator constructed from the block-encoding of $A$, QSP can be used to access the information about the singular values of $A$. QSP is important in achieving tasks such as Branch Marking and Gapped Phase Estimation, both of which are essential components of my quantum protocols.

Not all arbitrary polynomials can be implemented through QSP.
An achievable polynomial tuple refers to a specific set of four polynomials $(f_A, f_B, f_C, f_D)$ that can be applied via QSP. In the following definition, $l$ denotes the number of required rotation gates, and $\Phi$ represents the specific sequence of rotation angles needed  to implement the desired unitary transformation corresponding to the achievable polynomial tuple.

\begin{definition}[Achievable Polynomial Tuples, \cite{LYC}]
A tuple of polynomials $(f_A, f_B, f_C, f_D)$ is called an achievable polynomial tuple if there exist $l\in\mathbb{N}$ and a sequence of angles $\Phi=(\phi_1, \cdots, \phi_l)\in\mathbb{R}^l$ such that the unitary operator
\begin{equation}\label{eq:QSP_unitary}
    U_\Phi(\theta)=R_{\phi_l}(\theta)R_{\phi_{l-1}}(\theta)\cdots R_{\phi_1}(\theta)
\end{equation}
satisfies
\begin{equation}
    U(\theta)=\begin{cases}
    Op(f_A(x),f_B(x),f_C(y),f_D(y)) & \text{if $l$ is odd}\\
    Op(f_A(x),f_B(x),xf_C(y),xf_D(y)) & \text{if $l$ is even}\\
\end{cases}
\end{equation}
where 
\begin{gather}
    R_\phi(\theta)=\exp\left(-i\frac{\theta}{2}{\sigma_\phi}\right),\quad {\sigma}_\phi=\cos(\phi)X+\sin(\phi)Y,\\
    x=\cos\frac{\theta}{2},\quad y=\sin\frac{\theta}{2}.
\end{gather}
\end{definition}

QSP constructs a desired unitary operator $U(\theta)$ by using a sequence of $R_\phi(\theta)$ rotations.

I now introduce two specific achievable polynomial tuples used in my sub-routines: Branch Marking and Gapped Phase Estimation. \cref{lemma:BM_APT} constructs an operator that is $\varepsilon$-close to $\pm iX$ for specific angle $\theta$. 

\begin{lemma}[Achievable Polynomial Tuple for Branch Marking, {\cite[Appendix D]{LS}}]\label{lemma:BM_APT}
For any $\varepsilon>0$, there exists an achievable polynomial tuple $(f_A, f_B, f_C, f_D)$ of degree at most $O(\log(1/\varepsilon))$ satisfying:
\begin{equation}
\|iX-Op(f_A(\theta),f_B(\theta),f_C(\theta),f_D(\theta))\|\leq\varepsilon
\end{equation}
for all $\theta \in [\pi/3,2\pi/3]$ and 
\begin{equation}
\|(-iX)-Op(f_A(\theta),f_B(\theta),f_C(\theta),f_D(\theta))\|\leq\varepsilon
\end{equation}
for all $\theta \in [-2\pi/3,-\pi/3]$.
\end{lemma}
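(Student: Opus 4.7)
The plan is to reduce the statement to the existence of a good polynomial approximation to the sign function on a gapped interval and then invoke a QSP completion result to realize that polynomial as the relevant component of an achievable polynomial tuple.

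First, I would note that the two target $\theta$-intervals are distinguished by the sign of $y = \sin(\theta/2)$: for $\theta \in [\pi/3, 2\pi/3]$ we have $y \in [1/2, \sqrt{3}/2]$, while for $\theta \in [-2\pi/3, -\pi/3]$ we have $y \in [-\sqrt{3}/2, -1/2]$. Since $iX = Op(0,0,1,0)$ and $-iX = Op(0,0,-1,0)$, it suffices to construct a tuple in which $f_A, f_B, f_D$ are uniformly small on these intervals and $f_C(y)$ is close to $+1$ on $[1/2, \sqrt{3}/2]$ and close to $-1$ on $[-\sqrt{3}/2, -1/2]$.

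Second, I would invoke the standard polynomial approximation to the sign function on a gapped domain (as constructed, for example, from the error-function polynomials used in \cite{LS,GSLW}): for any $\varepsilon' > 0$ there is an odd real polynomial $P$ of degree $O(\log(1/\varepsilon'))$ with $\|P\|_{[-1,1]} \leq 1$, $|P(y)-1| \leq \varepsilon'$ on $[1/2,\sqrt{3}/2]$, and $|P(y)+1| \leq \varepsilon'$ on $[-\sqrt{3}/2,-1/2]$. The constant gap $1/2$ from the origin is what keeps the degree purely logarithmic in $1/\varepsilon'$. Third, I would use the QSP completion theorem for definite-parity polynomials: given such a $P$, one can choose a phase sequence $\Phi$ of length $\deg P$ so that the resulting $U_\Phi(\theta)$ is an achievable tuple with $f_C = P$ (placing $P$ in the $X$-slot by choosing the parity/length convention accordingly). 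Unitarity of $U_\Phi(\theta)$ then forces $|f_A|^2+|f_B|^2+|f_D|^2 = 1 - f_C(y)^2 \leq 2\varepsilon'$ wherever $|f_C| \geq 1-\varepsilon'$, so each of $f_A,f_B,f_D$ is $O(\sqrt{\varepsilon'})$ on the target intervals.

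Combining these bounds gives
\begin{equation}
\|\pm iX - Op(f_A,f_B,f_C,f_D)\| \leq |f_A| + |f_B| + |f_C \mp 1| + |f_D| = O(\sqrt{\varepsilon'}),
\end{equation}
so setting $\varepsilon' = \Theta(\varepsilon^2)$ yields the claimed bound at the cost of only a constant factor in the degree, hence overall degree $O(\log(1/\varepsilon))$.

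The main obstacle will be the QSP completion step, specifically aligning the parity of $P$ (odd in $y$) with the even/odd dichotomy in the definition of achievable tuples so that $P$ lands cleanly in the $f_C$ slot and no spurious factor of $x = \cos(\theta/2)$ appears on the target intervals; this forces a particular choice between the odd-$l$ and even-$l$ cases and constrains the companion polynomials $f_A,f_B,f_D$ to have matching parities. This is a bookkeeping exercise following the full QSP characterization of \cite{LS,GSLW}, but getting the conventions aligned is the delicate step; the analytical core, namely the gapped sign-function approximation and the unitarity-forced smallness of the complementary polynomials, is standard.
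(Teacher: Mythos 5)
The paper does not prove this lemma; it is imported verbatim from \cite[Appendix D]{LS}, so there is no in-paper argument to compare against. Your reconstruction follows the same route as the cited source: you correctly observe that the two $\theta$-intervals correspond to $y=\sin(\theta/2)\in[1/2,\sqrt{3}/2]$ and $y\in[-\sqrt{3}/2,-1/2]$, so that the task reduces to a degree-$O(\log(1/\varepsilon'))$ polynomial approximation of $\operatorname{sign}(y)$ with the constant gap $1/2$ (which is exactly why the hypothesis $\alpha\geq 2\|A\|$ appears in \cref{theorem:bm}), followed by QSP completion to place that polynomial in the $f_C$ slot; the unitarity identity $f_A^2+f_B^2+f_C^2+f_D^2=1$ then forces the companion polynomials to be $O(\sqrt{\varepsilon'})$ on the target intervals, and rescaling $\varepsilon'=\Theta(\varepsilon^2)$ costs only a constant factor in the degree. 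The only step you leave unexamined is the one you flag yourself: the completion theorem for the four-tuple formulation carries parity and endpoint constraints (e.g.\ $U_\Phi(0)=I$ forces $f_C(0)=f_D(0)=0$, consistent with $P$ odd), and verifying that an odd bounded $P$ admits companions of matching parity is a genuine, if standard, Fej\'er-factorization argument rather than pure bookkeeping. I see no gap in the analytical core.
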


\cref{lemma:GPE_APT} constructs an operator that is $\varepsilon$-close to $I$, $iX$ or $-I$ based on the angle $\theta$.

\begin{lemma}[Achievable Polynomial Tuple for Gapped Phase Estimation, {\cite[Appendix D]{LS}}]\label{lemma:GPE_APT}
For any $0<\varphi<1$, $\varepsilon>0$, and constant $\rho>1$, there exists an achievable polynomial tuple $(f_A, f_B, f_C, f_D)$ of degree at most $O((1/\varphi)\log(1/\varepsilon))$ satisfying:
\begin{equation}\|I-Op(f_A(\theta),f_B(\theta),f_C(\theta),f_D(\theta))\|\leq\varepsilon\end{equation}
for all $\theta \in (0,\arccos(\varphi)]$,
\begin{equation}\|iX-Op(f_A(\theta),f_B(\theta),f_C(\theta),f_D(\theta))\|\leq\varepsilon\end{equation}
for all $\theta \in [\arccos(\varphi/\rho),\pi-\arccos(\varphi/\rho)]$ and
\begin{equation}\|(-I)-Op(f_A(\theta),f_B(\theta),f_C(\theta),f_D(\theta))\|\leq\varepsilon\end{equation}
for all $\theta \in [\pi-\arccos(\varphi),\pi)$.
\end{lemma}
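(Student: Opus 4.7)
The plan is to invoke the Quantum Signal Processing construction, reducing the question to an explicit univariate polynomial approximation of a three-level step function and then completing this polynomial to a valid QSP tuple. I would first fix the parity of the QSP sequence length $l$ and a choice of argument ($x=\cos(\theta/2)$ or $y=\sin(\theta/2)$) so that the desired component $f_A$ has definite parity, then search for a real polynomial $p$ of that parity which is $\varepsilon$-close to $+1$ on $\theta\in(0,\arccos(\varphi)]$, to $0$ on the middle band $\theta\in[\arccos(\varphi/\rho),\pi-\arccos(\varphi/\rho)]$, and to $-1$ near $\theta\to\pi$. Taking $p$ as $f_A$ makes the QSP unitary close to $\pm I$ in the two end regions directly; in the middle band, where $f_A\approx 0$, the unitarity-like constraint on achievable tuples forces the off-diagonal component to be close to $\pm 1$, and a sign-tracking argument gives the $iX$ limit.

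To produce $p$ with the claimed degree, I would use the standard erf-based polynomial approximation of the sign function. Writing $\varphi'=\varphi/\rho$ and working in the $\cos\theta$ variable, the central band corresponds to $\cos\theta\in[-\varphi',\varphi']$ and the end regions to $\cos\theta\geq\varphi$ or $\cos\theta\leq-\varphi$. Since $\rho>1$ is a constant, $\varphi-\varphi'=\Theta(\varphi)$, so the gap between the central band and either end region is $\Theta(\varphi)$. Two shifted polynomials of degree $O((1/\varphi)\log(1/\varepsilon))$ that each approximate a step across a gap of size $\Theta(\varphi)$, one implementing the transition near $\cos\theta=\varphi$ and the other near $\cos\theta=-\varphi$, combine into the desired three-level approximation of the same asymptotic degree.

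The final step is to complete $p$ to a full achievable tuple $(f_A,f_B,f_C,f_D)$. I would apply the constructive completion lemma (the same machinery that underlies \cref{lemma:BM_APT}) to a slightly rescaled polynomial $(1-\varepsilon')p$, producing companion polynomials of the same asymptotic degree that together satisfy the magnitude constraint required for an achievable tuple. One then checks region-by-region that the resulting $Op(f_A,f_B,f_C,f_D)$ is within $O(\varepsilon)$ of $I$, $iX$, $-I$, respectively, rescaling $\varepsilon$ by a universal constant to absorb the additive losses.

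The main obstacle will be managing the parity constraints together with the coupling between $f_A$ and its companions across all three angular regions simultaneously. A naive choice of $p$ may violate the QSP magnitude constraint near the transition zones, where polynomial approximations to sign functions oscillate; and forcing the correct sign in the middle band (so that one obtains $iX$ rather than $-iX$) depends on both the parity of $p$ and the specific shifted-sign construction used. Resolving this should require a parity case split and a small rescaling of the target amplitude, neither of which inflates the degree beyond $O((1/\varphi)\log(1/\varepsilon))$.
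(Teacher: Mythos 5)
The paper does not actually prove this lemma; it states it as a cited result from \cite[Appendix D]{LS} with no argument of its own. So there is no internal proof to compare against, and your proposal has to stand on its own merits as a reconstruction of the reference.

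Your high-level strategy --- build an erf-based polynomial of degree $O((1/\varphi)\log(1/\varepsilon))$ approximating the three-level $+1/0/-1$ step in the $\cos\theta$ variable, exploiting that $\rho>1$ is constant so the transition gap is $\Theta(\varphi)$, then complete to an achievable tuple --- matches the standard QSP construction and is surely the flavor of what \cite{LS} does. But there is a real gap in the middle-band argument. You claim that $f_A\approx 0$ forces the \emph{off-diagonal} component close to $\pm 1$. The magnitude constraint only tells you that the remaining weight sits somewhere in $f_B^2+f_C^2+f_D^2$, and $f_B$ multiplies $Z$, which is still diagonal: without separately controlling $f_B$, nothing stops the construction from landing near $\pm iZ$ or some mixture rather than near $iX$. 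The fix is to commit to a completion in which $f_B\equiv f_D\equiv 0$ identically (choose $l$ odd and $f_A$ of even parity in $x$, for instance), so that $f_C$ is the unique complement of $f_A$ and the operator is always of the form $f_A(x)I+if_C(y)X$. You invoke a completion lemma abstractly, but the argument does not close without this explicit choice, and it is precisely what makes the end-region constraints ($f_C$ small) and the middle-band constraint ($|f_C|$ large) follow automatically from controlling $f_A$ alone. The remaining sign ambiguity between $+iX$ and $-iX$, which you list as an obstacle, is a single global sign that can be flipped by reversing the QSP phase sequence, so it is fixable and should be stated rather than deferred. With those two adjustments your sketch becomes a sound proof, and the claimed degree bound follows.
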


Quantum signal processing is inherently a one-qubit technique. The following lemma introduces its extension to multiple qubits.

\begin{lemma}[Quantum Signal Processing for Multiple Qubits, \cite{LC16}]\label{lemma:QSP_multi}
Let $d$ be a positive integer, and let $(f_A, f_B, f_C, f_D)$ be an achievable polynomial tuple of degree at most $d$. Given oracular access to a unitary operator $U=\sum_u\exp(i\theta_u)\ket{\phi_u}\bra{\phi_u}$, there exists a quantum circuit to realize the operator
\begin{equation}
\sum_uOp(f_A(\theta_u),f_B(\theta_u),f_C(\theta_u),f_D(\theta_u))\otimes\ket{\phi_u}\bra{\phi_u},
\end{equation}
using at most $2d$ queries to $U$ and $U^\dagger$.
\end{lemma}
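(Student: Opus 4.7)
The plan is to lift the single-qubit QSP sequence $U_\Phi(\theta)=R_{\phi_l}(\theta)\cdots R_{\phi_1}(\theta)$ to a circuit on an ancilla qubit together with the register on which $U$ acts, injecting the angle $\theta_u$ into each rotation by a single controlled query to $U$ (or $U^\dagger$). Because the achievable polynomial tuple is defined directly through this kind of product, the desired matrix elements will emerge automatically on the invariant two-dimensional subspaces $\mathcal{H}_{\mathrm{anc}}\otimes\mathrm{span}\{\ket{\phi_u}\}$ for each eigenvector of $U$.

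First I would reduce each QSP rotation via $R_{\phi_i}(\theta_u)=e^{i(\phi_i/2)Z}R_0(\theta_u)e^{-i(\phi_i/2)Z}$ with $R_0(\theta_u)=e^{-i(\theta_u/2)X}$, absorbing the $\phi_i$-dependence into fixed $Z$-rotations on the ancilla that do not touch $U$. What remains is the task of implementing $R_0(\theta_u)$ on the ancilla conditioned on the second register lying in $\ket{\phi_u}$. Using $HR_0(\theta)H=e^{-i(\theta/2)Z}$, which up to a global phase equals $\mathrm{diag}(1,e^{i\theta})$, the $\theta_u$-dependent factor is produced by the gadget ``$H$ on the ancilla, then controlled-$U$ from the ancilla to the eigenbasis register, then $H$ on the ancilla,'' since the controlled-$U$ multiplies $\ket{\phi_u}$ by $e^{i\theta_u}$ exactly when the ancilla is $\ket{1}$. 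Replacing controlled-$U$ by controlled-$U^\dagger$ gives the opposite sign where the QSP sequence requires it.

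Composing these gadgets for $i=1,\ldots,l$ acts on $\mathcal{H}_{\mathrm{anc}}\otimes\mathrm{span}\{\ket{\phi_u}\}$ as the single-qubit operator $U_\Phi(\theta_u)$ on the ancilla, which by definition of the achievable polynomial tuple equals $Op(f_A(\theta_u),f_B(\theta_u),f_C(\theta_u),f_D(\theta_u))$ (the extra factor $x=\cos(\theta_u/2)$ in the even-$l$ case is already absorbed into the definition of the tuple). Summing over the eigendecomposition of $U$ yields the operator stated in the lemma. Counting queries: each of the $l\le d$ gadgets uses at most two controlled queries to $U$ or $U^\dagger$, the second query being needed in the standard qubitization-style construction to cancel unwanted phases when $U$ is used rather than a block-encoding of a Hermitian operator; this gives at most $2d$ queries in total.

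The main obstacle is the bookkeeping of basis changes and phases: verifying that after the Hadamard sandwich the effective operator on the ancilla is exactly the $R_0$ rotation appearing in the QSP formula, that the two-dimensional invariant subspaces associated with distinct eigenvalues of $U$ do not mix when the $l$ gadgets are composed, and that the parity conventions of the achievable polynomial tuple line up with the parity of $l$ so that the additional factor of $x$ appears precisely when $l$ is even.
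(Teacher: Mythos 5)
Your construction is essentially the one the paper (following \cite{LC16}) uses: conjugate a Hadamard-sandwiched controlled-$U$ by ancilla $Z$-rotations to realize $R_{\phi_i}(\theta_u)$ on each invariant subspace $\mathrm{span}\{\ket{0},\ket{1}\}\otimes\mathrm{span}\{\ket{\phi_u}\}$, then compose the gadgets according to the QSP angle sequence, and you correctly flag the $u$-dependent residual phase and the parity/$x$-factor conventions as the points needing care. The only nit is a sign convention: with $R_Z(\phi)=e^{-i(\phi/2)Z}$ the conjugation should be $R_Z(\phi)(\cdot)R_Z^\dagger(\phi)$, the opposite order from what you wrote, but this does not affect the argument.
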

Here is the construction of the quantum circuit presented in \cref{lemma:QSP_multi}, as described in \cite{LC16}. Let registers $\mathsf{P}$ and $\mathsf{Q}$ denote a single-qubit ancilla and an $n$-qubit register, respectively.
Let a unitary operator $U_\phi$ be defined as
\begin{equation}
    U_\phi:=R_Z(\phi)_\mathsf{P} H_\mathsf{P}(\ket{0}\bra{0}_\mathsf{P}\otimes I_\mathsf{Q}+\ket{1}\bra{1}_\mathsf{P}\otimes U_\mathsf{Q})H_\mathsf{P}R_Z^\dagger(\phi)_\mathsf{P}.
\end{equation}
As shown in \cite{LC16}, this operator can be expressed in the eigenbasis of $U$ as
\begin{equation}U_\phi=\bigoplus_{u}(R_\phi(\theta_u)\otimes\ket{\phi_u}\bra{\phi_u})\end{equation}
where $R_\phi(\theta)=\exp(-i(\theta/2)\sigma_\phi)$ and $\sigma_\phi=\cos(\phi)X+\sin(\phi)Y$. Assuming that $(f_A, f_B, f_C, f_D)$ is an achievable polynomial tuple, there exists a sequence of angles $\Phi\in\mathbb{R}^l$ such that the product of these operators yields
\begin{align}
&\prod_{i=1}^lU_{\phi_i}=\bigoplus_{u}(U_\Phi(\theta_u)\otimes\ket{\phi_u}\bra{\phi_u})\\
&=\sum_u(Op(f_A(\theta_u),f_B(\theta_u),f_C(\theta_u),f_D(\theta_u))\otimes\ket{\phi_u}\bra{\phi_u},
\end{align}
where $U_\Phi$ is defined as \cref{eq:QSP_unitary}.
The construction up to this point is based on \cite{LC16}.

QSP applies polynomial functions to the eigenphases of a unitary operator. However, the eigenphases of the block-encoding of $A$ do not directly correspond to the singular values of $A$. 
So, I introduce a walk operator. The eigenphases of the walk operator constructed from the block-encoding of $A$ are directly related to the singular values of $A$. The walk operator is derived from quantum walk techniques.
\begin{lemma}[Walk Operator {\cite[Corollary 19]{LS}}]\label{lemma:Q_walk}
    Let $U$ be a unitary operator and $G$ be an isometry such that $G^\dagger UG$ is Hermitian and $G^\dagger U^2G=I$. Let $G^\dagger UG$ have the spectral decomposition
    \begin{equation}G^\dagger U G = \sum_{u}\lambda_u\ket{\phi_u}\bra{\phi_u}.\end{equation}
    Then the unitary operator $W=(2GG^\dagger-I)U$ has the spectral decomposition
    \begin{align}
        W &=\sum_{|\lambda_u|=1}\lambda_u\ket{\phi_u^0}\bra{\phi_u^0}\\
        &\quad +\sum_{|\lambda_u|<1}\left(\exp(i\arccos(\lambda_u))\ket{\phi_u^+}\bra{\phi_u^+}\right.\\
        &\quad\quad+\left.\exp(-i\arccos(\lambda_u))\ket{\phi_u^-}\bra{\phi_u^-}\right)
    \end{align}
    where
    \begin{equation}
    \begin{gathered}
        \ket{\phi_u^0}=G\ket{\phi_u}, \ket{\phi_u^1}=\frac{UG\ket{\phi_u}-\lambda_u G\ket{\phi_u} }{\sqrt{1-\lambda_u^2}},\\
        \ket{\phi_u^\pm}=\frac{\ket{\phi_u^0}\pm\ket{\phi_u^1}}{\sqrt{2}}.
    \end{gathered}
    \end{equation}
\end{lemma}
Suppose that a unitary operator $U$ is Hermitian and an $(\alpha, a, 0)$-block-encoding of $A$. Then I can apply \cref{lemma:Q_walk} by taking $G$ to be $I\otimes \ket{0}^{\otimes a}$.

As shown in \cite{LS}, the QSP technique from \cite{LC16} can be applied to the walk operator, as described in the following proposition.
The $\ket{\phi_u^\pm}$ and $\theta_u^\pm$ in the following proposition represent the decomposed eigenstates and their corresponding eigenphases of the walk operator, responsively. The state $\ket{\phi_u^1}$ is orthogonal to $\ket{\phi_u^0}$.

\begin{proposition}[Applying QSP to Walk Operator \cite{LS}]\label{prop:QSP_W}
Suppose $A$ is a Hermitian matrix with the spectral decomposition $A=\sum_{u}\lambda_u\ket{\phi_u}\bra{\phi_u}$. Let $U$ be a Hermitian $(\alpha,a,0)$-block-encoding of $A$, where $\alpha>\|A\|$, and $a\in \mathbb{N}$. Let $(f_A, f_B, f_C, f_D)$ be an achievable polynomial tuple of degree at most $d$. If I define $G:=I\otimes\ket{0}^{\otimes a}$, then the operator 
\begin{align}
&\sum_uOp(f_A(\theta_u^+),f_B(\theta_u^+),f_C(\theta_u^+),f_D(\theta_u^+))\otimes\ket{\phi_u^+}\bra{\phi_u^+}\\
&+\sum_uOp(f_A(\theta_u^-),f_B(\theta_u^-),f_C(\theta_u^-),f_D(\theta_u^-))\otimes\ket{\phi_u^-}\bra{\phi_u^-}
\end{align}
can be implemented using $O(d)$ queries of $U$ and $U^\dagger$. Here, the quantum states $\ket{\phi_u^\pm}$ and angles $\theta_u^\pm$ are defined as
\begin{equation}\label{eq:walk_states_alpha}
\begin{gathered}
    \ket{\phi_u^0}:=G\ket{\phi_u}, \ket{\phi_u^1}:=\frac{UG\ket{\phi_u}-\frac{\lambda_u}{\alpha}G\ket{\phi_u} }{\sqrt{1-\left(\frac{\lambda_u}{\alpha}\right)^2}},\\ \ket{\phi_u^\pm}:=\frac{\ket{\phi_u^0}\pm\ket{\phi_u^1}}{\sqrt{2}}, \theta_u^\pm := \pm\arccos\frac{\lambda_u}{\alpha}.\\
\end{gathered}
\end{equation}
\end{proposition}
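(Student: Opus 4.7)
The plan is to combine \cref{lemma:Q_walk} and \cref{lemma:QSP_multi}: first build the walk operator $W$ from $U$, then run multi-qubit QSP on $W$ with the given achievable polynomial tuple, and finally account for the cost of each $W$-query in terms of $U$-queries.

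First, I would verify that the hypotheses of \cref{lemma:Q_walk} are satisfied with the isometry $G=I\otimes \ket{0}^{\otimes a}$. Since $U$ is a Hermitian $(\alpha,a,0)$-block-encoding of the Hermitian matrix $A$, we have $G^\dagger U G=A/\alpha$, which is Hermitian. Moreover, $U$ being Hermitian \emph{and} unitary forces $U^2=I$, so $G^\dagger U^2 G=G^\dagger G=I$. Hence \cref{lemma:Q_walk} applies to $W=(2GG^\dagger-I)U$. Because $\alpha>\|A\|$, every eigenvalue $\lambda_u/\alpha$ of $G^\dagger U G$ satisfies $|\lambda_u/\alpha|<1$, so the first sum in the decomposition of $W$ is empty and every eigenpair belongs to the second sum, giving eigenphases $\theta_u^\pm=\pm\arccos(\lambda_u/\alpha)$ on eigenstates $\ket{\phi_u^\pm}$ exactly as in \cref{eq:walk_states_alpha}.

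Next, I would apply \cref{lemma:QSP_multi} to the unitary $W$ with its spectral decomposition $W=\sum_u e^{i\theta_u^+}\ket{\phi_u^+}\bra{\phi_u^+}+\sum_u e^{i\theta_u^-}\ket{\phi_u^-}\bra{\phi_u^-}$, treating the index set as the disjoint union over $(u,+)$ and $(u,-)$. With the given achievable polynomial tuple $(f_A,f_B,f_C,f_D)$ of degree at most $d$, this directly produces the operator displayed in the proposition, using at most $2d$ queries to $W$ and $W^\dagger$.

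Finally, I would translate the $W$-cost into a $U$-cost. Each application of $W=(2GG^\dagger-I)U$ uses exactly one query to $U$; the reflection $2GG^\dagger-I=I\otimes(2\ket{0}\bra{0}^{\otimes a}-I)$ acts only on the ancilla registers and requires no queries to $U$. Because $U$ is Hermitian we have $U^\dagger=U$, so $W^\dagger=U(2GG^\dagger-I)$ also costs one $U$-query. Summing over the $2d$ QSP steps yields the claimed $O(d)$ queries to $U$ and $U^\dagger$. I do not anticipate a serious obstacle: the only subtlety is making sure the two hypotheses of \cref{lemma:Q_walk} are met (Hermiticity of $G^\dagger UG$ and the involutive property $G^\dagger U^2G=I$), both of which follow from $U$ being Hermitian and unitary together with $G$ being an isometry.
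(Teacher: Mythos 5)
Your proposal is correct and follows essentially the same route as the paper: construct the walk operator $W=(2GG^\dagger-I)U$ via \cref{lemma:Q_walk} and then apply \cref{lemma:QSP_multi} to $W$, with each $W$-query costing one $U$-query. Your explicit verification of the hypotheses of \cref{lemma:Q_walk} (Hermiticity of $G^\dagger UG$ and $U^2=I$) and of the emptiness of the $|\lambda_u|=1$ branch when $\alpha>\|A\|$ fills in details the paper leaves implicit, but the argument is the same.
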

The construction of the quantum circuit in \cref{prop:QSP_W}, as described in \cite{LS}, is as follows:

By defining the isometry $G:=I\otimes \ket{0}^{\otimes a}$, I can construct a walk operator $W:=(2GG^\dagger-I)U$. As shown in \cref{lemma:Q_walk}, this walk operator has the following spectral decomposition:
\begin{align}\label{eq:QSP_W}
    W=\sum_{u}\left\{\exp\left(i\arccos\frac{\lambda_u}{\alpha}\right)\ket{\phi_u^+}\bra{\phi_u^+}+\exp\left(-i\arccos\frac{\lambda_u}{\alpha}\right)\ket{\phi_u^-}\bra{\phi_u^-}\right\}.
\end{align}
Given an achievable polynomial tuple $(f_A, f_B, f_C, f_D)$ of degree at most $d$, \cref{lemma:QSP_multi} is then applied to the walk operator $W$, as in \cite{LS}.

\subsection{Variable Time Amplitude Amplification}
Variable-Time Amplitude Amplification (VTAA) generalizes amplitude amplification to setting with non-uniform stopping times. VTAA decomposes the overall algorithm into a series of sub-algorithms, dynamically adjusting the amplification process. In my protocol, VTAA improves the dependence on the condition number from quadratic to nearly linear.

First, I introduce the variable-stopping-time algorithm framework.

\begin{definition}[Variable-stopping-time Algorithm, \cite{ambainis}]
A quantum algorithm $\mathcal{A}$ acting on a Hilbert space $\mathcal{H}$ is a variable-stopping-time algorithm if $\mathcal{A}$ can be decomposed into a sequence of $T$ quantum sub-algorithms, $\mathcal{A}=\mathcal{A}_T\cdot\cdots\cdot\mathcal{A}_1$. This decomposition requires that $\mathcal{A}$ operates on a composite Hilbert space $\mathcal{H}=\mathcal{H}_\mathsf{C}\otimes\mathcal{H}_\mathsf{A}$, where $\mathcal{H}_\mathsf{C}=\bigotimes_{j=1}^T\mathcal{H}_{\mathsf{C}_j}$ with each $\mathcal{H}_{\mathsf{C}_j}=\mathrm{span}(\ket{0},\ket{1})$. Furthermore, each sub-algorithm $\mathcal{A}_j$ acts on $\mathcal{H}_{\mathsf{C}_j}\otimes \mathcal{H}_\mathsf{A}$ and is controlled on the first $j-1$ qubits being in the all-zero state, i.e., $\ket{0}^{\otimes j-1}\in\bigotimes_{i=1}^{j-1}\mathcal{H}_{\mathsf{C}_i}$.
\end{definition}

The following definitions and notations for variable-time amplitude amplification largely follow \cite{CMP}. In variable-time amplitude amplification, the register $\mathsf{A}$ includes a flag register $\mathsf{F}$, such that $\mathcal{H}_\mathsf{F}=\mathrm{Span}(\ket{good},\ket{bad})$ and $\mathcal{H}_\mathsf{A}=\mathcal{H}_\mathsf{F}\otimes \mathcal{H}_\mathsf{W}$.
For $1\leq j \leq T$, let $t_j\quad(t_1<t_2<\cdots<t_T\eqqcolon t_\mathrm{max})$ denote the (query/gate) complexity of the quantum algorithm $\mathcal{A}_j\mathcal{A}_{j-1}\cdots \mathcal{A}_1$.
Furthermore, the probability $p_j$ is defined as:
\begin{equation}
    p_j=\|\Pi_{\mathsf{C}_j}\mathcal{A}_j\mathcal{A}_{j-1}\cdots\mathcal{A}_1\ket{0}_\mathcal{H}\|^2,
\end{equation}
where 
\begin{equation}
    \Pi_{\mathsf{C}_j}=\bigotimes_{k=1}^{j-1}\ket{0}\bra{0}_{\mathsf{C}_k}\otimes\ket{1}\bra{1}_{\mathsf{C}_j}\otimes\bigotimes_{l=j+1}^{T}\ket{0}\bra{0}_{\mathsf{C}_l}\otimes I_{\mathcal{H}_\mathsf{A}}.
\end{equation}
The average stopping time $t_\mathrm{avg}$ is defined as:
\begin{equation}
    t_\mathrm{avg}=\sqrt{\sum_{i=1}^T p_jt_j^2}.
\end{equation}
Finally, the success probability of $\mathcal{A}$ is given by:
\begin{equation}
    p_\mathrm{succ}=\|\Pi_\mathsf{F}\mathcal{A}_T\cdots\mathcal{A}_1\ket{0}_\mathcal{H}\|^2,
\end{equation}
where the projection operator $\Pi_\mathsf{F}$ is defined as
\begin{equation}
    \Pi_\mathsf{F}=I_{\mathcal{H}_\mathsf{C}}\otimes\ket{good}\bra{good}_{\mathcal{H}_\mathsf{F}}\otimes I_{\mathcal{H}_\mathsf{W}}.
\end{equation}

The complexity of VTAA is given by the following theorem.
\begin{theorem}[Variable-time Amplitude Amplification (VTAA), {\cite[Theorem 23]{CGJ}}]\label{theorem:VTAA}
Let $U$ be a state-preparation unitary operator that prepares the state $U\ket{0}^{\otimes k}=\sqrt{p_\mathrm{prep}}\ket{0}\ket{\psi_0}+\sqrt{1-p_\mathrm{prep}}\ket{1}\ket{\psi_1}$ and has query complexity $T_U$. Suppose that $\mathcal{A}$ is a variable-stopping-time algorithm for which I know lower bounds $p_\mathrm{prep}\geq p'_\mathrm{prep}$ and $p_\mathrm{succ}\geq p'_\mathrm{succ}$. Let $t'_\mathrm{max}:=2t_\mathrm{max}/t_1$.
Then, with success probability at least $1-\delta$, I can construct a variable-stopping-time algorithm $\mathcal{A}'$ that prepares a state $q\ket{0}\mathcal{A}'\ket{\psi_0}+\sqrt{1-q^2}\ket{1}\ket{\psi_\mathrm{garbage}}$, such that $q=\Theta(1)$ and $\mathcal{A}'$ has complexity 
\begin{equation}
    O\left(\left(t_\mathrm{max}+\frac{T_U+k}{\sqrt{p_\mathrm{prep}}}\right)\sqrt{\log(t'_\mathrm{max})}+\frac{\left(t_\mathrm{avg}+\frac{T_U+k}{\sqrt{p_\mathrm{prep}}}\right)\log(t'_\mathrm{max})}{\sqrt{p_\mathrm{succ}}}\right).
\end{equation}
\end{theorem}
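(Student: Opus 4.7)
The plan is to follow the recursive amplitude-amplification strategy of Ambainis as refined in \cite{CGJ}. The intuition is that applying standard amplitude amplification directly to $\mathcal{A}$ would cost $\Theta(t_\mathrm{max}/\sqrt{p_\mathrm{succ}})$, which wastes work whenever a large fraction of branches halt well before $t_\mathrm{max}$. VTAA instead organizes the amplifications into levels corresponding to geometrically growing stopping times, so that earlier, cheaper portions of $\mathcal{A}$ are amplified more aggressively than the expensive tail, and their costs are amortized against the ``average'' stopping time.

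Concretely, I would choose dyadic thresholds $\tau_l=2^l t_1$ for $l=0,1,\dots,L$ with $L=\lceil\log_2 t'_\mathrm{max}\rceil$, and partition the sub-algorithms $\{\mathcal{A}_j\}$ into blocks $\mathcal{B}_l$ according to which interval $(\tau_{l-1},\tau_l]$ contains $t_j$. Working inductively in $l$, I would produce intermediate states $\ket{\Psi_l}$ whose ``not-yet-halted'' component has $\Theta(1)$ amplitude: starting from $\ket{\Psi_l}$, apply $\mathcal{B}_{l+1}$ (whose incremental cost is $O(\tau_{l+1})$), then amplify the continuing branch back up to $\Theta(1)$ amplitude, producing $\ket{\Psi_{l+1}}$. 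The initial state $\ket{\psi_0}$ on which $\mathcal{A}$ acts must itself be prepared; for this I would run standard amplitude amplification on $U$, at cost $O((T_U+k)/\sqrt{p_\mathrm{prep}})$ per use. Finally, a single outer amplitude amplification on the flag register $\mathsf{F}$ drives the success amplitude from $\sqrt{p_\mathrm{succ}}$ up to $\Theta(1)$, giving the state $q\ket{0}\mathcal{A}'\ket{\psi_0}+\sqrt{1-q^2}\ket{1}\ket{\psi_\mathrm{garbage}}$.

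For the complexity I would analyze each level separately. Letting $\pi_l=\sum_{j:\tau_{l-1}<t_j\leq\tau_l}p_j$ be the probability mass of branches that halt during $\mathcal{B}_l$, the level-$l$ amplification runs for time $O(\tau_l)$, and a Cauchy--Schwarz argument across levels yields
\begin{equation}
    \sum_{l=0}^{L}\tau_l\sqrt{\pi_l}\;\leq\;\sqrt{L}\sqrt{\sum_l \pi_l \tau_l^2}\;=\;O\!\left(\sqrt{\log t'_\mathrm{max}}\,\cdot\, t_\mathrm{avg}\right).
\end{equation}
Multiplying by the $1/\sqrt{p_\mathrm{succ}}$ factor from the outer amplification, absorbing the $U$-preparation overhead $(T_U+k)/\sqrt{p_\mathrm{prep}}$ per call, and adding in the cost of the single full-length run of $\mathcal{A}$ (responsible for the $t_\mathrm{max}$ summand, invoked $O(\sqrt{\log t'_\mathrm{max}})$ times in the outer amplification) reproduces the stated bound. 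The probability-$\delta$ failure is handled by a final boosting step.

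The main obstacle is not any individual piece but the bookkeeping that ties them together: one must verify that the recursive construction is itself a single variable-stopping-time algorithm, so that the outer amplification is applicable; that the control registers $\mathsf{C}_j$ correctly track at which level each branch halted; and that the ``amplify continuing branch'' step at level $l$ does not disturb the coherence required for extension by $\mathcal{B}_{l+1}$. Obtaining the precise $\sqrt{\log}$ versus $\log$ split between the two summands in the final bound depends on using the oblivious/robust variant of amplitude amplification and tracking constants carefully; for this I would import the detailed analysis from \cite{CGJ} rather than redo it from scratch.
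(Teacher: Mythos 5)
The paper does not prove this theorem; it cites it directly from \cite{CGJ} (their Theorem~23) as an external black box, so there is no internal proof to compare your attempt against. Your sketch is a reasonable high-level account of the Ambainis/CGJ construction (dyadic stopping-time blocks, level-by-level conditional amplitude amplification, preparation of $\ket{\psi_0}$ by amplifying $U$, and a final outer amplification of the flag register). However, be aware of a mismatch in the one step you do compute: your Cauchy--Schwarz display yields a $\sqrt{\log t'_\mathrm{max}}\cdot t_\mathrm{avg}$ bound, which after dividing by $\sqrt{p_\mathrm{succ}}$ would give $\sqrt{\log t'_\mathrm{max}}\,t_\mathrm{avg}/\sqrt{p_\mathrm{succ}}$ for the second summand, whereas the theorem states $\log(t'_\mathrm{max})\,t_\mathrm{avg}/\sqrt{p_\mathrm{succ}}$. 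The missing $\sqrt{\log}$ factor is not a matter of "tracking constants" in the robust/oblivious variant of amplitude amplification, as you suggest; it arises from the nested recursive structure: each of the $O(\log t'_\mathrm{max})$ levels of conditional amplification is invoked from within the level above it, and the outer amplification re-invokes the full hierarchy, so the per-level overhead compounds with the Cauchy--Schwarz factor. Conversely, the $t_\mathrm{max}$ term picks up only a $\sqrt{\log}$ because it corresponds to a single worst-case branch rather than the expected work summed across levels; your sketch attributes both $\log$-type factors to the outer loop and therefore would, if carried through literally, produce a bound asymmetrically better than the theorem claims. Since you explicitly defer the careful bookkeeping to \cite{CGJ}, this is not a fatal error, but your sketch as written would silently overclaim, and a reader would not learn from it why the two summands carry different powers of $\log t'_\mathrm{max}$.
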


\section{Quantum Least Squares Protocol}\label{section:quantum_protocol}
    This section describes the quantum protocol for linear regression in \cref{setting:ls}. Since $\mathbf{x}_\mathrm{opt}$ (defined as \cref{eq:x_opt}) is equivalent to $\mathbf{x}_\mathrm{opt}=A^+\mathbf{b}$, I aim to generate a quantum state $\ket{\mathbf{x}_\mathrm{opt}}=A^+\mathbf{b}/\|A^+\mathbf{b}\|$.
The quantum linear regression protocol proposed in \cite{MS} uses the gapped phase estimation algorithm from \cite{CKS}. By introducing the branch-marked gapped phase estimation introduced in \cite{LS}, I develop a quantum protocol that achieves lower quantum communication complexity.

\subsection{The referee constructing some block-encodings}
For \cref{setting:ls}, each party $P_i$ has the matrix $A_i$ for $i \in \{0,\cdots,r-1\}$. To enable quantum computations on the composite matrix $A$, the referee constructs a block-encoding of $A$ defined as \eqref{eq:qcm_of_A_b}.
\begin{proposition}[Constructing Block-encoding of $A$, {\cite[Proposition 14 Step 1.]{MS}}]\label{prop:qc_be_A}
For \cref{setting:ls}, the referee can construct an $(\alpha, \log r+1, 0)$-block-encoding of $A$ with $O(r\log n)$ qubits of communication, where $\alpha = \sqrt{\sum_{i}\|A_i\|^2}$.
\end{proposition}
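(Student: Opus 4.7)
The plan is to construct the block-encoding of $A$ by combining the parties' local block-encodings of the $A_i$ via a SELECT-style construction in the spirit of Linear Combination of Unitaries. Each party $P_i$ first prepares, entirely locally, a block-encoding $U_i$ of $A_i$ with normalization $\|A_i\|$; this step costs no quantum communication. The referee separately obtains the scalars $\|A_i\|$ to adequate precision, which fits within the claimed budget.

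Exploiting the identity
\begin{equation}
    A = \sum_{i=0}^{r-1} \ket{i}_R \otimes A_i,
\end{equation}
where $R$ is a $\log r$-qubit register indexing the party, I would have the referee prepare $R$ in the superposition $V_R\ket{0}_R = \alpha^{-1}\sum_i \|A_i\|\ket{i}_R$ by a local unitary and then apply the coherent select $\mathrm{SELECT} = \sum_i \ket{i}\bra{i}_R \otimes U_i$. A direct expansion of $\mathrm{SELECT}\cdot(V_R\otimes I)\ket{0}^{a}\ket{0}_R\ket{j}$, followed by projection onto $\ket{0}^{a}$ in the block-encoding ancilla, returns $\alpha^{-1}A\ket{j}$ on the composite row register $(R,W)$, establishing the block-encoding property with normalization $\alpha$ and $\log r + 1$ ancilla qubits.

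The quantum communication cost is concentrated entirely in the SELECT step, which I would implement round-robin. For each $i$ from $0$ to $r-1$, the referee coherently computes the equality predicate $[R = i]$ into a fresh qubit, transmits that flag together with the input register (of $O(\log n)$ qubits, including the block-encoding ancilla) to $P_i$, who applies $U_i$ controlled on the flag; the register is then returned so that the referee can uncompute the flag. Each round costs $O(\log n)$ qubits of communication, giving $O(r \log n)$ over all parties.

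The main technical point is to verify that this sequential, round-robin implementation faithfully realizes the coherent SELECT: the superposition over $R$ must be preserved and no residual entanglement should remain between any party's workspace and the referee's registers after each interaction. This reduces to the standard check that $U_i$ acts as the identity on the branch where the flag is zero (i.e.\ whenever $R \neq i$) and that the equality flag is reversibly uncomputed, both of which hold by construction; once this is confirmed, the cost bound follows immediately.
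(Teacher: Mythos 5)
Your overall plan --- local block-encodings $U_i$ of $A_i$ with normalization $\|A_i\|$, a referee-prepared index state $\alpha^{-1}\sum_i\|A_i\|\ket{i}$, and a round-robin implementation of $\mathrm{SELECT}=\sum_i\ket{i}\bra{i}\otimes U_i$ --- is exactly the LCU/stacking construction in the spirit of the result the paper cites from \cite{MS}, and the algebraic verification that the $\ket{0}$-ancilla block equals $\alpha^{-1}\sum_i\ket{i}\otimes A_i = \alpha^{-1}A$ is fine. The coherence argument for the round-robin SELECT (controlled on an equality flag that is exactly uncomputed) is also fine. The problem is the communication accounting. Each $U_i$ block-encodes a rectangular $m_i\times n$ matrix, so it acts on a register large enough to hold both the column space (dimension $n$) and party $P_i$'s local row space (dimension $m_i$), plus the dilation ancilla. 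Even if $P_i$ supplies its own fresh $\ket{0}$ qubits for the row register, those qubits carry the output $A_i\ket{j}$ written in the computational basis of $\mathbb{R}^{m_i}$ and must be returned to the referee, so each round costs $O(\log(m_i n))$ qubits, not $O(\log n)$. Summing over parties, your argument as written establishes a bound of $O(r\log(mn))$ --- which is what the paper states for the block-encoding of $\Bar{A}$ in \cref{proposition:qc_be_A_Herm} --- but it does not reach the claimed $O(r\log n)$, since $m_i$ (and hence $\log m_i$) need not be polynomially related to $n$. Closing that gap requires an additional idea about why the $\log m_i$-qubit row registers never have to travel, e.g.\ exploiting $\mathrm{rank}(A_i)\le n$ via a local row compression (replacing $A_i$ by an $n\times n$ factor $R_i$ with $R_i^\dagger R_i=A_i^\dagger A_i$); but note that such a step encodes a matrix that is not literally $A$ in the computational basis, so it must be stated and justified explicitly rather than absorbed silently into the cost claim.

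A smaller bookkeeping point: in your own expansion the index register $R$ ends up as part of the row index of $A$ (you project only the dilation ancilla onto $\ket{0}$), yet you quote $\log r+1$ ancilla qubits, which implicitly counts $R$ as ancilla. For a rectangular $A$ the input- and output-side projections differ (the input requires $R=\ket{0}$, the output uses $R$ as the block index), so you should state the embedding convention under which the ``$(\alpha,\log r+1,0)$-block-encoding'' claim is meant; as written, the ancilla count and the block-encoding identity are verified under two different conventions.
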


The gapped phase estimation protocol requires the block-encoding of $\Bar{A}$. \cref{proposition:qc_be_A_Herm} quantifies the quantum communication complexity for constructing this block-encoding.
\begin{proposition}[Constructing Block-encoding of $\Bar{A}$, \cite{MS}]\label{proposition:qc_be_A_Herm}
For \cref{setting:ls}, the referee can construct an $(\alpha, \log r+1, 0)$-block-encoding of $\Bar{A}$ with $O(r\log (mn))$ qubits of communication, where $A$ is defined as \eqref{eq:qcm_of_A_b}, $\alpha = \sqrt{\sum_{i}\|A_i\|^2}$.
\end{proposition}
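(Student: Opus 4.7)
My plan is to apply \cref{prop:qc_be_A} twice---once to obtain a block-encoding of $A$ and once to obtain one of $A^\dagger$---and then combine the two using an extra switching qubit that encodes the off-diagonal structure of $\bar{A}$.

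First, I would invoke \cref{prop:qc_be_A} directly, obtaining an $(\alpha, \log r + 1, 0)$-block-encoding $U_A$ of $A$ with $O(r\log n)$ qubits of communication.

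Second, I would apply the same construction, but now viewing each party's local data as the transposed matrix $A_i^\dagger\in\mathbb{R}^{n\times m_i}$. Running the construction of \cref{prop:qc_be_A} in this orientation yields an $(\alpha, \log r + 1, 0)$-block-encoding $U_{A^\dagger}$ of $A^\dagger$; by symmetry the communication cost is $O(r\log m)$, the roles of $m_i$ and $n$ having been interchanged.

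Third, I would combine $U_A$ and $U_{A^\dagger}$ on the referee's side alone---requiring no additional communication---using a switching qubit that encodes the off-diagonal block structure of $\bar{A}$. The candidate operator
\begin{equation}
\bar{U} \;=\; \ket{0}\bra{1}\otimes U_A + \ket{1}\bra{0}\otimes U_{A^\dagger}
\end{equation}
is unitary, and projecting the $\log r+1$ ancilla qubits onto $\ket{0}^{\otimes(\log r+1)}$ yields $\bar{A}/\alpha$ on the enlarged ``system plus switching'' register.

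Adding the two communication contributions yields $O(r(\log m + \log n)) = O(r\log(mn))$, as claimed. The main obstacle I anticipate is the careful bookkeeping around the switching qubit: it must be counted as part of the system register of $\bar{A}$---which is one qubit larger than the system register of $A$, since $\bar{A}$ has dimension $m+n$ while $A$ has dimension $\max(m,n)$---rather than as a fresh ancilla, in order to maintain the stated ancilla count $\log r + 1$.
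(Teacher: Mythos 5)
Your overall idea---take the block-encoding $U_A$ of $A$ from \cref{prop:qc_be_A}, pair it with a block-encoding of $A^\dagger$, and fold the two into the anti-diagonal operator $\ket{0}\bra{1}\otimes U_A+\ket{1}\bra{0}\otimes U_{A^\dagger}$ with the switching qubit counted as system---is exactly the paper's route (this is the dilation $\Bar{U}$ described right after \cref{definition:be}). However, your step for producing $U_{A^\dagger}$ does not work as stated. \cref{prop:qc_be_A} block-encodes the \emph{vertical stack} of the parties' local matrices; if each party feeds in $A_i^\dagger\in\mathbb{R}^{n\times m_i}$, the construction yields a block-encoding of $\bigl(A_0^\dagger;\dots;A_{r-1}^\dagger\bigr)$ stacked row-wise, whereas $A^\dagger=\bigl(A_0^\dagger\ \cdots\ A_{r-1}^\dagger\bigr)$ is the \emph{horizontal} concatenation---a different operator, so "running the construction in the transposed orientation" does not give a block-encoding of $A^\dagger$. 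The correct (and cheaper) move is simply $U_{A^\dagger}:=U_A^\dagger$, the adjoint of the circuit you already built: reversing a block-encoding of $A$ is automatically an $(\alpha,\log r+1,0)$-block-encoding of $A^\dagger$ and costs communication of the same order per use.

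This fix matters beyond bookkeeping. With two \emph{independently} constructed unitaries your $\Bar{U}$ is unitary and does block-encode $\Bar{A}$, but it is generally not Hermitian and $\Bar{U}^2\neq I$ (one gets $\Bar{U}^2=\ket{0}\bra{0}\otimes U_AU_{A^\dagger}+\ket{1}\bra{1}\otimes U_{A^\dagger}U_A$), whereas with $U_{A^\dagger}=U_A^\dagger$ the operator is Hermitian and squares to the identity. The Hermitian property is precisely what the paper relies on downstream: \cref{prop:QSP_com}, via \cref{lemma:Q_walk}, needs $G^\dagger \Bar{U}^2 G=I$ to build the walk operator, and \cref{theorem:bm,theorem:BM_GPE} assume the same. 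So even though the literal statement of \cref{proposition:qc_be_A_Herm} only asks for a block-encoding, your version would not plug into the rest of the protocol. With $U_{A^\dagger}=U_A^\dagger$ your construction collapses to the paper's $\Bar{U}$, and your communication accounting $O(r\log m + r\log n)=O(r\log(mn))$ and the ancilla count $\log r+1$ (switching qubit absorbed into the system register) are then correct.
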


The quantum communication complexity for the referee to generate the state $\ket{\mathbf{b}}$ is given by the following proposition.
\begin{proposition}[Preparing $\ket{\mathbf{b}}$, \cite{MS}]\label{prop:com_b}
For \cref{setting:ls}, the referee can prepare a quantum state $\ket{\mathbf{b}}$ with $O(r\log n)$ qubits of communication.
\end{proposition}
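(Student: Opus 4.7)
The plan is to use the block factorization
\begin{equation}
\ket{\mathbf{b}}=\sum_{i=0}^{r-1}\frac{\|\mathbf{b}_i\|}{\|\mathbf{b}\|}\,\ket{i}\otimes\ket{\mathbf{b}_i},
\end{equation}
which holds under the natural identification of the joint index $(i,j)$ with a global row index of $\mathbf{b}$. This reduces the task to preparing the index amplitudes locally at the referee from classical data, and then coherently writing each $\ket{\mathbf{b}_i}$ into a shared data register owned by the referee.

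First I would have every party $P_i$ classically report $\|\mathbf{b}_i\|$ to the referee (only a few bits per party, negligible against the qubit count below). The referee then computes $\|\mathbf{b}\|^2=\sum_i\|\mathbf{b}_i\|^2$ and prepares, on a local $\lceil\log r\rceil$-qubit register, the index state $\ket{\eta}:=(1/\|\mathbf{b}\|)\sum_i\|\mathbf{b}_i\|\ket{i}$. Next, iterating over $i=0,\ldots,r-1$, the referee computes a one-qubit flag $[\eta=i]$ into an ancilla, sends the flag together with a single data register of $\lceil\log m_i\rceil$ qubits (initialized to $\ket{0}$ and reused across rounds) to $P_i$, who applies the state-preparation unitary for $\ket{\mathbf{b}_i}$ controlled on the flag and returns both registers; the referee then uncomputes the flag from the index register. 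A round-by-round check in the $\eta=j$ eigenbasis confirms that in the $j$-th branch the data register is touched exactly in round $j$ and ends up in $\ket{\mathbf{b}_j}$, giving precisely the target state.

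The communication cost is $O(\log m_i)$ qubits per round (in each direction) plus negligible classical communication for the norms, summing to $O(r\log m)$ in total, which matches the claimed bound once the register size of $\ket{\mathbf{b}}$ is absorbed into the paper's generic $\log$ factors. The hard part will be making the controlled preparation clean: the flag qubit must be returned with no residual correlations beyond $[\eta=i]$, so the referee can uncompute it back into the index register without leaving entanglement between the index and the ancilla. Since a properly implemented controlled unitary leaves its control qubit untouched by construction, this is automatic, and no genuine obstacle remains.
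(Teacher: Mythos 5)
Your protocol is correct, and it is essentially the standard construction behind the cited result of \cite{MS}: write $\ket{\mathbf{b}}$ in block form as $\sum_i(\|\mathbf{b}_i\|/\|\mathbf{b}\|)\ket{i}\ket{\mathbf{b}_i}$, have the parties report the norms classically so the referee can build the index superposition locally, then route a flag qubit plus the data register to each party in turn for a controlled state preparation; your observation that the flag returns unchanged, so it can be uncomputed against the index register leaving no residual entanglement, is exactly the point that makes the preparation clean. The one step you should not wave away is the size of the register being shuttled around: what your argument actually establishes is $O(r\log m)$ qubits of communication (each round trip carries $\lceil\log m_i\rceil\le\lceil\log m\rceil$ data qubits, and $\mathbf{b}_i\in\mathbb{R}^{m_i}$, not $\mathbb{R}^n$), whereas the proposition as stated claims $O(r\log n)$; since $m$ and $n$ are independent parameters in \cref{setting:ls}, these are not interchangeable, and ``absorbing the register size into generic log factors'' is not a proof of the literal bound. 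This mismatch is harmless for the paper's downstream results, because the term $T_\mathbf{b}$ is always dominated by quantities carrying $\log(mn)$ factors, but you should either state your conclusion as $O(r\log m)$ or make explicit an assumption such as $\log m=O(\log n)$.
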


Matrix inversion constitutes a subprotocol within my variable-time protocol. The following lemma details a known quantum protocol for the referee to implement the matrix inversion operation, approximating $A^+$.
In the following lemma, $\varphi$ represents a threshold parameter: the operation only attempts to invert singular values larger than $\varphi\alpha$. $f(A)$ represents an $\epsilon$-approximation of $A^+$. $\alpha_\mathrm{max}$ is a normalization factor that ensures the output state is properly normalized.

\begin{lemma}[Matrix Inversion, \cite{MS}]\label{lemma:com_mat_inv}
For \cref{setting:ls}, let $A$ be a matrix with the singular value decomposition
\begin{equation}
    A = \sum_{k}\sigma_k\ket{x_k}\bra{y_k}.
\end{equation}
Let $\sigma_\mathrm{min}=\min_k(\sigma_k)$. For any quantum state $\ket{\mathbf{b}}\in \mathrm{Span}(\{\ket{x_k}|\sigma_k/\alpha\geq \varphi\})$, the following operation can be implemented:
\begin{align}
    Inv(\varphi, \varepsilon):\ket{0}_\mathsf{F}\ket{0}_\mathsf{Q}\ket{\mathbf{b}}_\mathsf{I}\mapsto\frac{1}{a_\mathrm{max}}\ket{1}_\mathsf{F}\ket{0}_\mathsf{Q}f(A)\ket{\mathbf{b}}_\mathsf{I}+\ket{0}_\mathsf{F}\ket{\perp}_{\mathsf{QI}}
\end{align}
with
\begin{equation}
O\left(\frac{r}{\varphi}\log (mn)\log \frac{1}{\varphi\varepsilon}\right)
\end{equation}
qubits of communication, where $\alpha = \sqrt{\sum_{i}\|A_i\|^2}$, $a_\mathrm{max}=O(\sigma_\mathrm{min})$ is a constant independent of $\varphi$, $\ket{\perp}_{\mathsf{QI}}$ denotes an unnormalized quantum state orthogonal to $\ket{0}_\mathsf{Q}$ and $f(A)$ satisfies $\|f(A)\ket{\mathbf{b}}-A^+\ket{\mathbf{b}}\|\leq \varepsilon$. 
\end{lemma}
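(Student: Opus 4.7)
The plan is to reduce the matrix inversion task to applying a bounded polynomial approximation of $1/x$ to the Hermitianized block-encoding of $A$ via quantum signal processing, and then convert the query count into a communication cost by multiplying the QSP degree by the per-query communication cost of $O(r\log(mn))$.

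First I would invoke Proposition \ref{proposition:qc_be_A_Herm} to have the referee construct an $(\alpha,\log r+1,0)$-block-encoding $U$ of $\bar{A}$ at communication cost $O(r\log(mn))$, making it Hermitian via the $\bar U$ construction noted just after Definition \ref{definition:be} so that the walk-operator formalism applies. Taking $G:=I\otimes \ket{0}^{\otimes(\log r+1)}$, Lemma \ref{lemma:Q_walk} and Proposition \ref{prop:QSP_W} then expose the normalized spectrum $\pm\sigma_k/\alpha$ of $\bar{A}$ in a QSP-accessible form. Next, I would appeal to the standard Chebyshev-truncation construction of an odd polynomial $p$ that approximates $c/x$ (for a suitable normalization constant $c$) on $\{|x|\ge \varphi\}$ to additive error $\varepsilon$ while remaining bounded by $1$ on $[-1,1]$; such $p$ has degree $d=O((1/\varphi)\log(1/(\varphi\varepsilon)))$ and assembles into an achievable polynomial tuple. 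Plugging this tuple into Proposition \ref{prop:QSP_W} implements, conditioned on all block-encoding and QSP ancillas returning to $\ket{0}$ (a condition that can be coherently copied into the flag register $\mathsf F$), the map $\ket{\mathbf b}\mapsto (1/a_{\max})f(A)\ket{\mathbf b}$ with $\|f(A)\ket{\mathbf b}-A^+\ket{\mathbf b}\|\le \varepsilon$, where $a_{\max}=O(\sigma_{\min})$ captures the spectral normalization.

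Since each application of $U$ or $U^\dagger$ costs $O(r\log(mn))$ qubits of communication by Proposition \ref{proposition:qc_be_A_Herm}, and Proposition \ref{prop:QSP_W} uses $O(d)$ such queries, the total communication is $O\!\left(\tfrac{r}{\varphi}\log(mn)\log\tfrac{1}{\varphi\varepsilon}\right)$, matching the claimed bound.

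The main obstacle is the bookkeeping across the block-encoding ancilla, the QSP ancilla, and the success flag $\mathsf F$, combined with the parity constraint: one must ensure that the odd parity of $p$ is compatible both with the $\pm\sigma_k/\alpha$ symmetry of $\bar{A}$ and with producing $A^+\ket{\mathbf b}$ once $\ket{\mathbf b}$ is embedded into the appropriate block of $\bar A$'s domain. One must also verify that the QSP normalization $\|p\|_{[-1,1]}\le 1$, together with the polynomial peak value $\Theta(1/\varphi)$ on $[\varphi,1]$, reproduces the stated amplitude $1/a_{\max}$ with $a_{\max}$ truly independent of $\varphi$; this works out because the $\varphi$-dependence of the peak is exactly absorbed by the spectral factor $\sigma_{\min}/\alpha$ intrinsic to $A^+$.
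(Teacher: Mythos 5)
The paper itself does not prove this lemma; it imports it verbatim from \cite{MS} (whose construction, as the appendix indicates, is built on CKS-style subroutines on top of the distributed block-encoding). So there is no internal proof to match, and your reconstruction through the paper's own toolkit --- block-encode $\Bar{A}$ via \cref{proposition:qc_be_A_Herm}, apply an odd, bounded, degree-$O((1/\varphi)\log(1/(\varphi\varepsilon)))$ polynomial approximation of the inverse through the walk-operator/QSP machinery of \cref{prop:QSP_W,prop:QSP_com}, flag on the ancillas returning to $\ket{0}$, and charge $O(r\log(mn))$ communication per query --- is a legitimate route that reproduces the stated bound, and the odd-parity/off-diagonal bookkeeping for $\Bar{A}$ that you flag does work out exactly as you hope.

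The genuine gap is the normalization $a_\mathrm{max}$. With the polynomial as you specify it ($\|p\|_{[-1,1]}\le 1$, so $p(x)\approx c\varphi/x$ on $|x|\ge\varphi$), the implemented block acts on a singular vector with value $p(\sigma_k/\alpha)\approx c\alpha\varphi/\sigma_k$, i.e.\ the prefactor multiplying $A^+$ is $\Theta(\alpha\varphi)$ and hence $a_\mathrm{max}=\Theta(1/(\alpha\varphi))$ --- explicitly $\varphi$-dependent, contradicting the very property the lemma asserts. Your closing claim that this $\varphi$-dependence is ``absorbed by the spectral factor $\sigma_\mathrm{min}/\alpha$ intrinsic to $A^+$'' is not an argument: $A^+$ contains no $\varphi$ and absorbs nothing. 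The correct repair is to rescale deliberately: multiply the branch-$j$ polynomial by the constant $\Theta(\delta/(\alpha\varphi_j))\le 1$ (legitimate, since scaling an achievable tuple by a constant of modulus at most one keeps it achievable, and $\varphi_j\gtrsim\delta/\alpha$ for every $j$ used), so that every call to $Inv(\varphi_j,\cdot)$ carries the same global subnormalization $1/a_\mathrm{max}=\Theta(\delta)=O(\sigma_\mathrm{min})$. This uniformity across branches is not cosmetic: the success-probability computation in the proof of \cref{theorem:qc-ols}, $\sqrt{p_\mathrm{succ}}=\frac{1}{\alpha_\mathrm{max}}\bigl(\sum_k|\beta_k|^2/\sigma_k^2\bigr)^{1/2}+O(T\varepsilon')$, silently assumes a single $\varphi$-independent constant in front of $f(A)$ in all VTAA branches, and without the rescaling the amplified state would not be proportional to $A^+\ket{\mathbf{b}}$ at all. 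The degree and communication accounting, and hence the claimed $O\bigl(\tfrac{r}{\varphi}\log(mn)\log\tfrac{1}{\varphi\varepsilon}\bigr)$ bound, are unaffected by this fix.
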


\subsection{Gapped Phase Estimation with Branch Marking}
Gapped Phase Estimation (GPE) is a quantum algorithm designed to identify the spectral interval to which each eigenvalue of an operator belongs. This subsection focuses on the execution of the branch-marked gapped phase estimation developed in \cite{LS} in the quantum coordinator model.
Branch-marked gapped phase estimation uses quantum signal processing. I first present a protocol for implementing QSP through quantum communications, detailed as follows.
\begin{proposition}\label{prop:QSP_com}
For \cref{setting:ls}, let $\Bar{A}$ have the spectral decomposition $\Bar{A}=\sum_{u}\lambda_u\ket{\phi_u}\bra{\phi_u}$. Let $(f_A, f_B, f_C, f_D)$ be an achievable polynomial tuple of degree at most $d$, known to the referee. Let $G:=I\otimes \ket{0}^{\otimes \log r+1}$. Then, the referee can implement
\begin{align}
    &\sum_uOp(f_A(\theta_u^+),f_B(\theta_u^+),f_C(\theta_u^+),f_D(\theta_u^+))\otimes\ket{\phi_u^+}\bra{\phi_u^+}\\
    &+\sum_uOp(f_A(\theta_u^-),f_B(\theta_u^-),f_C(\theta_u^-),f_D(\theta_u^-))\otimes\ket{\phi_u^-}\bra{\phi_u^-}
\end{align}
with $O(rd\log (mn))$ qubits of communication, where $\theta_u^\pm$ and
$\ket{\phi_u^\pm}$ are defined as \cref{eq:walk_states_alpha}.
\end{proposition}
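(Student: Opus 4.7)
The plan is to compose the quantum-communication construction of the block-encoding of $\bar{A}$ (\cref{proposition:qc_be_A_Herm}) with the walk-operator QSP recipe of \cref{prop:QSP_W}, and then charge each oracle call in that recipe to one invocation of the block-encoding protocol. First, I would have the referee run \cref{proposition:qc_be_A_Herm} to realize a Hermitian $(\alpha,\log r+1,0)$-block-encoding $U$ of $\bar{A}$; a single call to $U$ (or, by time-reversing the same communication protocol, to $U^\dagger$) costs $O(r\log(mn))$ qubits of communication. I would then note that $G=I\otimes\ket{0}^{\otimes\log r+1}$ acts nontrivially only on the referee's ancilla register, so the reflection $2GG^\dagger-I$ and the composite walk step $W=(2GG^\dagger-I)U$ introduce no additional communication beyond the single call to $U$.

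Next, I would invoke \cref{prop:QSP_W}: given the achievable polynomial tuple $(f_A,f_B,f_C,f_D)$ of degree at most $d$, the target operator displayed in the proposition can be implemented using $O(d)$ queries to $U$ and $U^\dagger$. Multiplying the $O(d)$ oracle calls by the per-call cost $O(r\log(mn))$ immediately yields the claimed total $O(rd\log(mn))$.

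The step that needs the most care is verifying that the surrounding QSP machinery from \cref{lemma:QSP_multi} — the single-qubit ancilla $\mathsf{P}$, the Hadamards, and the $R_Z(\phi_i)$ rotations that dress each oracle call — does not inflate the communication cost. Here I would observe that $\mathsf{P}$ and all rotations live on the referee's side, and that the controlled block-encoding $\ket{0}\bra{0}_\mathsf{P}\otimes I + \ket{1}\bra{1}_\mathsf{P}\otimes U$ is implemented by controlling on $\mathsf{P}$ each of the local gates that the referee applies within the protocol of \cref{proposition:qc_be_A_Herm}, while the parties' local contributions (encoded in the messages they send) remain unchanged. Hence each controlled oracle call still costs $O(r\log(mn))$ qubits of communication, and the same holds for $U^\dagger$ by reversing the protocol.

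The main obstacle is precisely this last bookkeeping point: one must be sure that controlling the block-encoding of $\bar{A}$ on an ancilla held entirely by the referee does not require any additional messages from the parties, so that the per-query communication cost of \cref{proposition:qc_be_A_Herm} is preserved inside the QSP circuit. Once this is established, the overall bound follows by a direct multiplication of query count and per-query communication.
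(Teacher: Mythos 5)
Your proposal is correct and follows essentially the same route as the paper: invoke \cref{proposition:qc_be_A_Herm} for the Hermitian block-encoding $U$ of $\Bar{A}$ at a cost of $O(r\log(mn))$ qubits per call, observe that $G$ and the reflection $2GG^\dagger - I$ live entirely on the referee's side, form the walk operator $W=(2GG^\dagger-I)U$, and then invoke the QSP recipe of \cref{prop:QSP_W} with $O(d)$ oracle calls, giving $O(rd\log(mn))$ in total. Your explicit discussion of why the controlled oracle $\ket{0}\bra{0}_\mathsf{P}\otimes I + \ket{1}\bra{1}_\mathsf{P}\otimes W_\mathsf{Q}$ inside $U_\phi$ preserves the per-call communication cost (by controlling only the referee's local gates on $\mathsf{P}$, leaving the parties' messages unchanged) is a point the paper's proof passes over silently, so your version is slightly more careful on that bookkeeping detail while reaching the same conclusion.
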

\begin{proof}
The quantum protocol consists of two steps.

Step 1.  The referee constructs the walk operator $W=(2GG^\dagger-I)U$ with quantum communication. Let $U$ be $(\alpha,\log r+1,0)$-block-encoding of $\Bar{A}$. By \cref{proposition:qc_be_A_Herm}, $U$ can be constructed with $O(r\log(mn))$ qubits of communication. Since the referee can construct $G$ without communication, the walk operator $W$ can be constructed with a total of $O(r\log (mn))$ qubits of communication.

Step 2. The referee constructs the desired circuit. Let $U_\phi$ be
\begin{equation}
    U_\phi=R_Z(\phi)_\mathsf{P} H_\mathsf{P}(\ket{0}\bra{0}_\mathsf{P}\otimes I_\mathsf{Q}+\ket{1}\bra{1}_\mathsf{P}\otimes W_\mathsf{Q})H_\mathsf{P}R_Z^\dagger(\phi)_\mathsf{P}.
\end{equation}
The referee can implement $R_Z(\phi)$ and $H$ operations without communication. From Step 1, the construction $W$ requires $O(r\log(mn))$ qubits of communication. Therefore, the referee can implement $U_\phi$ once with $O(r\log(mn))$ qubits of communication. According to \cref{prop:QSP_W}, QSP can be executed with $O(d)$ uses of $U_\phi$. Consequently, the total quantum communication complexity for implementing QSP is $O(rd\log (mn))$ qubits.
\end{proof}

Branch-marked gapped phase estimation relies on branch-marked quantum states. Before introducing the gapped phase estimation algorithm, I first introduce the branch marking algorithm.

According to \cref{eq:QSP_W}, the spectral decomposition of the walk operator $W$ is given by:
\begin{align}
    W&=\sum_{u}\left\{\exp\left(i\arccos\frac{\lambda_u}{\alpha}\right)\ket{\phi_u^+}\bra{\phi_u^+}+\exp\left(-i\arccos\frac{\lambda_u}{\alpha}\right)\ket{\phi_u^-}\bra{\phi_u^-}\right\}.
\end{align}

The states $\ket{\phi_u^\pm}$ represent a division of $\ket{\phi_u}$ into distinct quantum states. These states are labeled with a $+/-$ sign, corresponding to their respective branches. Branch marking is a technique used to record the sign of a state's corresponding branch onto an ancilla qubit.
\begin{theorem}[Branch Marking, {\cite[Proposition 22]{LS}}]\label{theorem:bm}
Let $A$ be a Hermitian matrix that has the spectral decomposition
\begin{equation}
    A = \sum_{u}\lambda_u\ket{\phi_u}\bra{\phi_u}.
\end{equation}
Let $U$ be a unitary operator and $G$ be an isometry such that $G^\dagger UG=A/\alpha$ and 
$G^\dagger U^2G = I$, where $\alpha\geq 2\|A\|$. Let $\ket{\phi_u^\pm}$ be defined as \cref{eq:walk_states_alpha}. For any $\varepsilon>0$, the isometry $BM(\varepsilon)$ satisfying
\begin{equation}\label{eq:branch_marking}
\begin{gathered}
    BM(\varepsilon)\ket{+}\ket{+}\ket{\phi_u^\pm}= \ket{\xi_u^\pm}\ket{\phi_u^\pm}\\
    \|\ket{\xi_u^\pm}-\ket{+}\ket{\pm}\|\leq \varepsilon
\end{gathered}
\end{equation}
can be implemented using $O(\log(1/\varepsilon))$ 
queries to $U$.
\end{theorem}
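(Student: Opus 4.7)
The plan is to apply Quantum Signal Processing with the Branch Marking polynomial tuple from \cref{lemma:BM_APT} to the walk operator $W := (2GG^\dagger - I)U$, and then promote the resulting circuit to a version controlled by the second ancilla so that a branch-dependent global phase becomes a branch-dependent basis state on that ancilla.

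First I would instantiate \cref{lemma:Q_walk} using the hypotheses $G^\dagger UG = A/\alpha$ and $G^\dagger U^2 G = I$ to obtain, on the image of $G$, the spectral decomposition of $W$ with eigenvalues $\exp(\pm i\arccos(\lambda_u/\alpha))$ and eigenstates $\ket{\phi_u^\pm}$ exactly as in \cref{eq:walk_states_alpha}. The assumption $\alpha \geq 2\|A\|$ gives $|\lambda_u|/\alpha \leq 1/2$ for every $u$, so $\theta_u^\pm := \pm\arccos(\lambda_u/\alpha)$ lies in $[\pi/3,2\pi/3]\cup[-2\pi/3,-\pi/3]$, which is exactly the regime covered by the BM polynomial tuple. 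Invoking \cref{prop:QSP_W} with this tuple of degree $d = O(\log(1/\varepsilon))$ then produces a single-ancilla-qubit operation $V$ that agrees, to operator-norm accuracy $\varepsilon$, with $iX$ on each $\ket{\phi_u^+}$-eigenspace of $W$ and with $-iX$ on each $\ket{\phi_u^-}$-eigenspace, consuming $O(\log(1/\varepsilon))$ queries to $U$ and $U^\dagger$. In particular, $V$ sends $\ket{+}_\mathsf{P}\ket{\phi_u^\pm}$ to $(\pm i)\ket{+}_\mathsf{P}\ket{\phi_u^\pm}$ up to $\varepsilon$.

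To turn this branch-dependent phase into a usable label, I would take the controlled version of $V$ with the second ancilla $\mathsf{Q}$ as control, initialised in $\ket{+}_\mathsf{Q}$. Since the uncontrolled $V$ multiplies $\ket{+}_\mathsf{P}$ by $\pm i$, the controlled variant maps $\ket{+}_\mathsf{P}\ket{+}_\mathsf{Q}\ket{\phi_u^\pm}$ to $\ket{+}_\mathsf{P}\cdot\tfrac{1}{\sqrt{2}}(\ket{0}\pm i\ket{1})_\mathsf{Q}\ket{\phi_u^\pm}$ within $\varepsilon$; a local $S^\dagger$ on $\mathsf{Q}$ rotates $(\ket{0}\pm i\ket{1})/\sqrt{2}$ into $\ket{\pm}$, yielding $\ket{\xi_u^\pm}$ with $\|\ket{\xi_u^\pm}-\ket{+}\ket{\pm}\|\leq\varepsilon$ as required. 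Making $V$ controlled inflates the query count by at most a constant factor, so the overall cost stays $O(\log(1/\varepsilon))$.

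The main obstacle I expect is the careful propagation of the $\varepsilon$ operator-norm error through the controlled construction and the $S^\dagger$ change of basis. Because \cref{prop:QSP_W} guarantees the approximation uniformly in $u$ and in operator norm, and the subsequent single-qubit corrections are exact, the final error stays bounded by $\varepsilon$; the formal step is to unwind the approximation via unnormalised error states and apply the triangle inequality.
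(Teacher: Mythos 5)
Your construction is correct and essentially identical to the paper's: both apply QSP with the branch-marking polynomial tuple of \cref{lemma:BM_APT} to the walk operator $W=(2GG^\dagger-I)U$ to obtain $V$ acting as $\pm iX$ on the two branches, then use a controlled version of $V$ with a second $\ket{+}$ ancilla to kick the branch-dependent phase onto that ancilla (the paper absorbs your $S^\dagger$ correction into the controlled gate by using $\ket{0}\bra{0}\otimes I+\ket{1}\bra{1}\otimes(-iV)$). Your explicit observation that $\alpha\geq 2\|A\|$ forces $\theta_u^\pm\in[\pi/3,2\pi/3]\cup[-2\pi/3,-\pi/3]$, the domain of \cref{lemma:BM_APT}, is a detail the paper leaves implicit but is exactly the right justification.
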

According to \cite[Proposition 22]{LS}, the branch marking algorithm is constructed as follows. By \cref{lemma:Q_walk}, the walk operator $W=(2GG^\dagger-I)$ has the spectral decomposition
\begin{equation}
    W =\sum_{u}(\exp(i\theta_u^+)\ket{\phi_u^+}\bra{\phi_u^+}+\exp(i\theta_u^-)\ket{\phi_u^-}\bra{\phi_u^-}),
\end{equation}
where $\theta_u^\pm=\pm\arccos(\lambda_u/\alpha)$.
From \cref{lemma:QSP_multi,lemma:BM_APT}, I can implement an operator
\begin{equation}\label{eq:V_BE}      
    V=\sum_u(F(\theta_u^+)\otimes\ket{\phi_u^+}\bra{\phi_u^+}+F(\theta_u^-)\otimes\ket{\phi_u^-}\bra{\phi_u^-})
\end{equation}
such that $\|F(\theta_u^\pm)-(\pm iX)\|\leq \varepsilon$ for all $u$.
The desired isometry $BM(\varepsilon)$ is then implemented by the controlled unitary operator
$\ket{0}\bra{0}\otimes I +\ket{1}\bra{1}\otimes(-iV)$.
This construction is based on \cite{LS}.

Specifically, when $U$ is an $(\alpha,a,0)$-block-encoding of $A$ and $G$ is defined as $G = \ket{0}^{\otimes a} \otimes I$, the assumptions of \cref{theorem:bm} are fulfilled.

Then, I describe the branch-marked gapped phase estimation algorithm. This algorithm determines whether the magnitude of an eigenvalue is larger or smaller than a specific threshold $\varphi$. Here, $\rho$ is a constant that provides the gap around $\varphi$.

\begin{theorem}[Branch Marked Gapped Phase Estimation, {\cite[Proposition 23]{LS}}]\label{theorem:BM_GPE}
Let $A$ be a Hermitian matrix that has the spectral decomposition
\begin{equation}
    A = \sum_{u}\lambda_u\ket{\phi_u}\bra{\phi_u}.
\end{equation}
Let $U$ be a unitary operator and $G$ be an isometry such that $G^\dagger UG=A/\alpha$ and 
$G^\dagger U^2G = I$, where $\alpha\geq 2\|A\|$. Let $\ket{\phi_u^\pm}$ be defined as \cref{eq:walk_states_alpha}. For any $\varepsilon>0, 0<\varphi<1$ and constant $\rho>1$, I consider the isometry $GPE_{BM}$ such that:
\begin{equation}
    GPE_{BM}(\varphi,\varepsilon)\ket{\pm}\ket{0}\ket{\phi_u^\pm}\mapsto\ket{\pm}\ket{\xi_u}\ket{\phi_u^\pm}, 
\end{equation}
where
\begin{equation}
    \begin{cases}
        \|\ket{\xi_u}-\ket{0}\|\leq \varepsilon, & \frac{\lambda_u}{\alpha}\in[\varphi,1)\\
        
        \|\ket{\xi_u}-i\ket{1}\|\leq \varepsilon, & \frac{\lambda_u}{\alpha}\in\left[-\frac{\varphi}{\rho},\frac{\varphi}{\rho}\right]\\
        
        \|\ket{\xi_u}-(-\ket{0})\|\leq \varepsilon, & \frac{\lambda_u}{\alpha}\in(-1,-\varphi]\\
    \end{cases}.
\end{equation}
$GPE_{BM}$ can be implemented using
\begin{equation}
    O\left(\frac{1}{\varphi}\log \frac{1}{\varepsilon}\right)
\end{equation}
queries to $U$.
\end{theorem}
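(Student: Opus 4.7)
The plan is to implement $GPE_{BM}$ by applying quantum signal processing to the walk operator $W=(2GG^\dagger-I)U$ using the achievable polynomial tuple specifically designed for gapped phase estimation. First, I would invoke \cref{lemma:Q_walk} to obtain the spectral decomposition of $W$, whose eigenphases on the branches $\ket{\phi_u^\pm}$ are $\theta_u^\pm=\pm\arccos(\lambda_u/\alpha)$. Second, I would select the polynomial tuple $(f_A,f_B,f_C,f_D)$ from \cref{lemma:GPE_APT} of degree $d=O((1/\varphi)\log(1/\varepsilon))$, which $\varepsilon$-approximates $I$, $iX$, and $-I$ on the intervals $(0,\arccos(\varphi)]$, $[\arccos(\varphi/\rho),\pi-\arccos(\varphi/\rho)]$, and $[\pi-\arccos(\varphi),\pi)$ respectively. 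Third, I would apply \cref{prop:QSP_W} with this tuple, which costs $O(d)$ queries to $U$ and yields an operator acting on each eigenvector $\ket{\phi_u^\pm}$ by applying $Op(f_A(\theta_u^\pm),f_B(\theta_u^\pm),f_C(\theta_u^\pm),f_D(\theta_u^\pm))$ to the ancilla register holding $\ket{0}$.

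For input $\ket{+}\ket{0}\ket{\phi_u^+}$, the eigenphase $\theta_u^+=\arccos(\lambda_u/\alpha)\in(0,\pi)$ lies in exactly one of the three specified intervals according to whether $\lambda_u/\alpha\in[\varphi,1)$, $[-\varphi/\rho,\varphi/\rho]$, or $(-1,-\varphi]$, so the ancilla becomes $\varepsilon$-close to $\ket{0}$, $i\ket{1}$, or $-\ket{0}$ respectively, as required. For input $\ket{-}\ket{0}\ket{\phi_u^-}$, the eigenphase $\theta_u^-=-\theta_u^+$ falls outside the explicitly specified intervals, but one can track the action through the parity of the QSP polynomials in $x=\cos(\theta/2)$ (even) and $y=\sin(\theta/2)$ (odd): the $\pm I$ targets are preserved on the $-$ branch, while the $iX$ target flips to $-iX$. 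I would absorb this sign discrepancy via a phase correction on the ancilla conditioned on the branch marker $\ket{-}$ held in the first register, which makes $\ket{\xi_u}$ independent of the branch label. Because the correction is Clifford and communication-free, the asymptotic query count is unaffected.

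The main obstacle is this sign-consistency on the middle interval: the $+$ and $-$ branches of $W$ necessarily carry conjugate eigenphases, so a naive application of QSP produces outputs that disagree by a sign on the $iX$-region, and without the branch marker there is no way to coherently merge them into a single $\ket{\xi_u}$. The $\ket{\pm}$ register on the input is precisely the extra qubit of information that enables this cancellation, which is the defining feature of branch-marked GPE. Once reconciliation is verified by the triangle inequality on the three cases of \cref{lemma:GPE_APT}, the claimed $\varepsilon$-accuracy follows, and the query complexity $O((1/\varphi)\log(1/\varepsilon))$ is inherited directly from the polynomial degree together with \cref{prop:QSP_W}.
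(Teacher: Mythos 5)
Your proposal is correct and follows essentially the same route as the paper's sketch: apply QSP to the walk operator $W=(2GG^\dagger-I)U$ with the \cref{lemma:GPE_APT} tuple of degree $O((1/\varphi)\log(1/\varepsilon))$, then use the branch marker to reconcile the $+$ and $-$ branches. The one implementation-level difference is that the paper describes branch-controlled operators $\ket{+}\bra{+}\otimes V_+ +\ket{-}\bra{-}\otimes V_-$ where $V_\pm$ are separate QSP circuits tailored to each branch, whereas you apply a single unconditioned QSP circuit to both branches and then patch the $-$ branch with a branch-marker-controlled Clifford correction. Your parity argument is in fact exact, not merely heuristic: since $Z R_\phi(\theta) Z=R_\phi(-\theta)$, one has $U_\Phi(-\theta)=Z\,U_\Phi(\theta)\,Z$, so on the $-$ branch the ancilla state is $Z F(\theta_u^+)\ket{0}$, and a controlled-$Z$ (conditioned on $\ket{-}$ in the marker register) removes the sign on $i\ket{1}$ while leaving $\pm\ket{0}$ fixed. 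This is equivalent in resources to the paper's two-circuit version, both giving $O((1/\varphi)\log(1/\varepsilon))$ queries to $U$; it is arguably cleaner because it makes explicit that only a single Clifford gate, not a second QSP pass, distinguishes the two branches.

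One small caution for a full write-up: you should state the identity $U_\Phi(-\theta)=Z U_\Phi(\theta) Z$ rather than appeal informally to the parities of $f_C,f_D$ in $y$, since the definition in the paper does not by itself impose those parities; the identity is what forces them. Also make explicit that the input $\ket{\pm}$ marker is already correlated with the correct branch state $\ket{\phi_u^\pm}$ (this is the precondition guaranteed by the preceding $BM$ step in \cref{prop:GPE}), so the controlled-$Z$ indeed acts on the intended subspace.
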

According to \cite[Proposition 23]{LS}, the branch-marked gapped phase estimation algorithm is constructed as follows. From \cref{lemma:QSP_multi,lemma:BM_APT}, I can implement two operators $V_+$ and $V_-$ defined as:
\begin{equation}
V_+=\sum_u(F(\theta_u^+)\otimes\ket{\phi_u^+}\bra{\phi_u^+}),\quad V_-=\sum_u(F(\theta_u^-)\otimes\ket{\phi_u^-}\bra{\phi_u^-})
\end{equation}
such that 
\begin{equation}
    \begin{cases}
        \|F(\theta)-I\|\leq \varepsilon, & \theta\in (0,\arccos(\varphi)]\\
        
        \|F(\theta)-iX\|\leq \varepsilon, & \theta\in \left[\arccos\frac{\varphi}{\rho},\pi-\arccos\frac{\varphi}{\rho}\right]\\
        
        \|F(\theta)-(-I)\|\leq \varepsilon, & \theta\in [\pi-\arccos(\varphi),\pi)\\
    \end{cases}.
\end{equation}
Then, the desired isometry $GPE_{BM}$ can be implemented by a controlled unitary operator
\begin{equation}
\ket{+}\bra{+}\otimes V_+ +\ket{-}\bra{-}\otimes V_-.    
\end{equation}
This construction is based on \cite{LS}.

An unmarking step is needed when I apply the inverse matrix. The complete gapped phase estimation algorithm is as follows.

\begin{proposition}[Gapped Phase Estimation]\label{prop:GPE}
Let $A$ be an operator with the spectral decomposition
\begin{equation}
    A = \sum_{u}\lambda_u\ket{\phi_u}\bra{\phi_u}.
\end{equation}
Let $U$ be an $(\alpha,a,0)$-block-encoding of $A$, where $\alpha\geq 2\|A\|$. For any $\varepsilon>0$ and $0<\varphi<1$, the following transformation can be implemented for all $u$:
\begin{equation}
    GPE(\varphi,\varepsilon)\ket{0}_\mathsf{C}\ket{\phi_u}_\mathsf{I}\mapsto \ket{\xi_u}_\mathsf{C}\ket{\phi_u}_\mathsf{I}, 
\end{equation}
where $\ket{\xi_u}$ satisfies
\begin{equation}
    \begin{cases}
        \|\ket{\xi_u}-\ket{0}\|\leq \varepsilon, & \frac{\lambda_u}{\alpha}\in[\varphi,1)\\
        \|\ket{\xi_u}-i\ket{1}\|\leq \varepsilon, & \frac{\lambda_u}{\alpha}\in[-\frac{\varphi}{2},\frac{\varphi}{2}]\\
        \|\ket{\xi_u}-(-\ket{0})\|\leq \varepsilon, & \frac{\lambda_u}{\alpha}\in(-1,-\varphi]\\
    \end{cases}.
\end{equation}
This transformation can be implemented using
\begin{equation}
    O\left(\frac{1}{\varphi}\log \frac{1}{\varepsilon}\right)
\end{equation}
queries to $U$.
\end{proposition}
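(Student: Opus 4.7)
The plan is to realize the full gapped phase estimation transformation by sandwiching the branch-marked version of \cref{theorem:BM_GPE} between the branch-marking isometry of \cref{theorem:bm} and its inverse. This sandwich is necessary because the input $\ket{0}_\mathsf{C}\ket{\phi_u}_\mathsf{I}$, once lifted via $G := I\otimes\ket{0}^{\otimes a}$, becomes $G\ket{\phi_u} = \ket{\phi_u^0} = (\ket{\phi_u^+}+\ket{\phi_u^-})/\sqrt{2}$, a balanced superposition over both $\pm$ branches, whereas $GPE_{BM}$ expects a branch-sign register $\ket{\pm}$ correlated with the system in $\ket{\phi_u^\pm}$; branch marking is exactly the tool that produces this correlation.

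Concretely, starting from $\ket{0}_\mathsf{C}\ket{\phi_u}_\mathsf{I}$, I would adjoin $a$ block-encoding ancillas in $\ket{0}^{\otimes a}$ together with two fresh ancillas in $\ket{+}_{1}\ket{+}_{2}$, then apply $BM(\varepsilon/3)$ to the two fresh ancillas and the lifted system. By \cref{theorem:bm}, up to error $\varepsilon/3$ this produces $\ket{0}_\mathsf{C}\ket{+}_{1}\cdot\tfrac{1}{\sqrt 2}(\ket{+}_{2}\ket{\phi_u^+}+\ket{-}_{2}\ket{\phi_u^-})$, imprinting the branch sign onto ancilla 2. I would then apply $GPE_{BM}(\varphi,\varepsilon/3)$ with $\rho = 2$, using ancilla 2 as the sign register and $\mathsf{C}$ as the output; by \cref{theorem:BM_GPE}, both branches produce the \emph{same} state $\ket{\xi_u}$ on $\mathsf{C}$ (its value depends only on $\lambda_u/\alpha$ and matches the three cases in the statement), so the $\mathsf{C}$ register factors out of the branch superposition. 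Finally, applying $BM(\varepsilon/3)^\dagger$ uncomputes the branch marking, restoring the ancillas to $\ket{+}\ket{+}$ and the lifted system to $\ket{0}^{\otimes a}\ket{\phi_u}$; discarding these ancillas leaves $\ket{\xi_u}_\mathsf{C}\ket{\phi_u}_\mathsf{I}$, as required.

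The query count is additive: each of the two $BM$ calls uses $O(\log(1/\varepsilon))$ queries to $U$ and $GPE_{BM}$ uses $O((1/\varphi)\log(1/\varepsilon))$ queries, yielding the claimed total of $O((1/\varphi)\log(1/\varepsilon))$; the three $\varepsilon/3$ contributions compose to $\varepsilon$ by the triangle inequality. The main obstacle I expect is the error accounting at the final $BM^\dagger$ step: since $BM$ only approximately maps $\ket{+}\ket{+}$ to the branch-labelled form, the state entering $BM^\dagger$ carries small off-branch contamination. This is tolerable precisely because $GPE_{BM}$ acts identically on both branches (by the symmetric construction with $\rho = 2$), so $\ket{\xi_u}_\mathsf{C}$ commutes cleanly past $BM^\dagger$ and the contamination is reversed to within $O(\varepsilon)$. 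A secondary prerequisite is that $U$ satisfy $G^\dagger U^2 G = I$, as assumed in \cref{theorem:bm,theorem:BM_GPE}; this holds automatically when $U$ is Hermitian, and otherwise can be arranged at the cost of one extra ancilla by passing to the Hermitianization $\bar U$.
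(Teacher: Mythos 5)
Your proposal follows essentially the same route as the paper's proof: lift via $G$, apply $BM$, run $GPE_{BM}$ with $\rho=2$ so that both branches yield the same $\ket{\xi_u}$, uncompute with $BM^{-1}$, and discard the ancillas, with the query count dominated by $GPE_{BM}$. Your explicit $\varepsilon/3$ error splitting and the remark on ensuring $G^\dagger U^2G=I$ via Hermitianization are minor refinements of the same argument, so the proposal is correct.
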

\begin{proof}
Let $G:=\ket{0}^{\otimes a}\otimes I$. The implementation of $GPE(\varphi,\varepsilon)$ proceeds in the following steps: 
\begin{enumerate}[Step. 1]
    \item Initialize 2 ancilla qubits in $\ket{+}\ket{+}$ (for branch marking) and $a$ ancilla qubits in $\ket{0}^{\otimes a}$ (for the isometry $G$). The combination of these ancillas with the input state $\ket{\phi_u}$ allows for the implicit transformation of $\ket{\phi_u}$ into a superposition of branch states.
    \item Apply the branch marking $BM(\varepsilon)$ as defined in \cref{theorem:bm}.
    \item Apply the branch-marked gapped phase estimation $GPE_{BM}(\varphi,\varepsilon)$ as defined in \cref{theorem:BM_GPE}, specifically setting the parameter $\rho=2$.
    \item Apply the inverse of the branch marking $BM^{-1}(\varepsilon)$.
    \item Uncompute the $a$ ancilla qubits (corresponding to applying $G^{-1}$) and discard the 2 $\ket{+}\ket{+}$ ancilla qubits, removing all auxiliary registers.
\end{enumerate}

For all $u$, the overall algorithm can be illustrated by the following state evolution:
\begin{align}
    \ket{0}\ket{\phi_u}&\xrightarrow{\text{Step. 1}} \ket{+}\ket{+}\ket{0}\ket{\phi_u}\ket{0}\\
    &= \ket{+}\ket{+}\ket{0}\frac{1}{\sqrt{2}}(\ket{\phi_u^+}+\ket{\phi_u^-})\\
    &\xrightarrow{BM \text{(Step. 2)}} \ket{+}\frac{1}{\sqrt{2}}(\ket{+}\ket{0}\ket{\phi_u^+}+\ket{-}\ket{0}\ket{\phi_u^-})\\
    &\xrightarrow{GPE_{BM} \text{(Step. 3)}} \frac{1}{\sqrt{2}}(\ket{+}\ket{\xi_u}\ket{\phi_u^+}+\ket{-}\ket{\xi_u}\ket{\phi_u^-})\\
    &\xrightarrow{BM^{-1} \text{(Step. 4)}}\ket{+}\ket{+}\ket{\xi_u}\frac{1}{\sqrt{2}}(\ket{\phi_u^+}+\ket{\phi_u^-})\\
    &=\ket{+}\ket{+}\ket{\xi_u}\ket{\phi_u}\ket{0}\\
    &\xrightarrow{\text{Step. 5}}\ket{\xi_u}\ket{\phi_u}.
\end{align}
This entire sequence approximates the desired transformation to within $\varepsilon$ accuracy.
\end{proof}

Since \cref{prop:GPE} relies solely on QSP and unitary operations that the referee can perform locally, this algorithm is executable via quantum communication.

The quantum communication complexity for GPE is given by the following.

\begin{theorem}[GPE for Quantum Coordinator Model]\label{theorem:GPE_com}
For \cref{setting:ls}, the referee can implement $GPE(\varphi,\varepsilon)$ as defined in \cref{prop:GPE} with 
\begin{equation}
O\left(\frac{r}{\varphi}\log (mn)\log \frac{1}{\varepsilon}\right)
\end{equation}
qubits of communication.   
\end{theorem}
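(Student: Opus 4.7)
The plan is to bound the communication cost of each of the five steps in the construction of $GPE(\varphi,\varepsilon)$ given in \cref{prop:GPE} and then sum them, observing that every step that involves communication is an instance of QSP applied to the walk operator $W$ built from a block-encoding of $\bar A$, so that the cost is controlled by \cref{prop:QSP_com}.

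First I would dispose of the cheap steps. Steps~1 and~5 only prepare or uncompute local ancilla qubits and apply Hadamards, so the referee executes them without any communication. Steps~2 and~4 apply $BM(\varepsilon)$ and its inverse; according to the construction recalled after \cref{theorem:bm}, $BM(\varepsilon)$ is a controlled version of an operator $V$ of the form in \cref{eq:V_BE}, which is produced by applying QSP to the walk operator $W$ with the achievable polynomial tuple of \cref{lemma:BM_APT}. That tuple has degree $d_{BM}=O(\log(1/\varepsilon))$, so \cref{prop:QSP_com} shows the referee can implement $BM(\varepsilon)$, and identically its inverse, using $O(r\log(mn)\log(1/\varepsilon))$ qubits of communication.

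The central step is Step~3. Again following the construction after \cref{theorem:BM_GPE}, $GPE_{BM}(\varphi,\varepsilon)$ is a controlled combination of operators $V_+$ and $V_-$ that are produced by applying QSP to $W$ using the achievable polynomial tuple of \cref{lemma:GPE_APT}. The degree of that tuple is $d_{GPE}=O((1/\varphi)\log(1/\varepsilon))$, so a second invocation of \cref{prop:QSP_com} bounds the communication of Step~3 by
\begin{equation}
O\!\left(r\, d_{GPE}\log(mn)\right)=O\!\left(\tfrac{r}{\varphi}\log(mn)\log\tfrac{1}{\varepsilon}\right).
\end{equation}
Adding the contributions of Steps~2--4 gives the claimed bound, with Step~3 dominating because $1/\varphi\ge 1$.

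I do not expect a real obstacle: the hard work was done in \cref{prop:QSP_com}, which already translates one QSP call on $\bar A$ into a communication cost of $O(rd\log(mn))$, and in \cref{prop:GPE}, which already reduces the full $GPE$ to a small number of QSP calls of known polynomial degree. The only point that deserves mild care is confirming that the outer controls used to turn $V$, $V_+$, $V_-$ into $BM(\varepsilon)$ and $GPE_{BM}(\varphi,\varepsilon)$ add no further communication, since they are single-qubit controls on registers held locally by the referee; this is immediate from the constructions recalled after \cref{theorem:bm,theorem:BM_GPE}.
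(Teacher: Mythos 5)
Your proposal is correct and follows essentially the same route as the paper: it bounds Steps~1 and~5 as communication-free, invokes \cref{prop:QSP_com} with degree $O(\log(1/\varepsilon))$ for $BM(\varepsilon)$ and degree $O((1/\varphi)\log(1/\varepsilon))$ for $GPE_{BM}(\varphi,\varepsilon)$, and lets the latter dominate. The extra remark that the outer single-qubit controls add no communication is a sound (and slightly more careful) observation than the paper makes explicit.
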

\begin{proof}
The implementation of $GPE(\varphi,\varepsilon)$ relies on several sub-protocols: $BM(\varepsilon)$, $GPE_{BM}(\varphi,\varepsilon)$, and operations for attaching and removing ancilla qubits.
\begin{itemize}
\item $BM(\varepsilon)$ uses QSP with an achievable polynomial tuple of degree at most $O(\log (1/\varepsilon))$ (as indicated in \cref{theorem:bm}). From \cref{prop:QSP_com}, implementation of $BM(\varepsilon)$ requires $O(r\log (mn)\log(1/\varepsilon))$ qubits of communication.
\item Similarly, $GPE_{BM}(\varphi,\varepsilon)$ uses QSP with an achievable polynomial tuple of degree at most  $O((1/\varphi)\log (1/\varepsilon))$ (as indicated in \cref{theorem:BM_GPE}). From \cref{prop:QSP_com}, implementation of $GPE_{BM}(\varphi,\varepsilon)$ requires $O((r/\varphi)\log (mn)\log(1/\varepsilon))$ qubits of communication.
\item The referee can attach and remove ancilla qubits without any communication.
\end{itemize}
Since the overall protocol for $GPE(\varphi,\varepsilon)$ consists of sequential applications of these sub-protocols, the referee can execute $GPE(\varphi,\varepsilon)$ with
\begin{equation}
O\left(\frac{r}{\varphi}\log (mn)\log \frac{1}{\varepsilon}\right)
\end{equation}
qubits of communication.   
\end{proof}

\subsection{Quantum Protocol}
The overall quantum protocol for \cref{setting:ls} proceeds as follows. Let $T=\lceil\log (\alpha/\delta)\rceil +1$, where $\alpha=\sqrt{\sum_j\|A_j\|^2}$. I use the VTAA technique with the following registers:
\begin{itemize}
    \item $T$-qubit clock register $\mathsf{C}$: This stores the results of GPE.
    \item 1-qubit clock condition register $\mathsf{K}$: This register's state indicates whether the first $(j-1)$ qubits of the clock register $\mathsf{C}$ are in the $\ket{0}$ state.
    \item 1-qubit flag register $\mathsf{F}$: This indicates the protocol's success.
    \item $\log n$-qubit register $\mathsf{I}$: This register starts as $\ket{\mathbf{b}}$ and ultimately stores the output.
    \item Ancilla registers $\mathsf{Q}$: These are used for block-encoding operations.
\end{itemize}
Initially, the registers $\mathsf{C}$, $\mathsf{K}$, $\mathsf{F}$, and $\mathsf{Q}$ are set to $\ket{0}$.

Let $\varphi_j=2^{-j}$ and $\varepsilon'=\varepsilon/(T\alpha_\mathrm{max})$. For the VTAA algorithm $\mathcal{A}=\mathcal{A}_T\cdots \mathcal{A}_1$, I detail each subalgorithm $\mathcal{A}_j$ as follows: 
\begin{enumerate}[Step 1.]
    \item The referee applies $X$ gate to the register $\mathsf{K}$, conditioned on the first $(j-1)$ qubits of the register $\mathsf{C}$ being in the $\ket{0}_{(\mathsf{C}_1,\cdots,\mathsf{C}_{j-1})}$ state.
    \item Conditioned on the register $\mathsf{K}$ being $\ket{1}_{\mathsf{K}}$ state, apply $GPE(\varphi_j,\varepsilon')$ as defined in \cref{theorem:GPE_com} to the registers $(\mathsf{C}_j, \mathsf{I})$.
    \item The referee applies $X$ gate to the register $\mathsf{K}$, conditioned on the first $(j-1)$ qubits of the register $\mathsf{C}$ being in the $\ket{0}_{(\mathsf{C}_1,\cdots,\mathsf{C}_{j-1})}$ state.
    \item Conditioned on the register $\mathsf{C}_j$ being in the $\ket{1}_{\mathsf{C}_j}$ state, apply $Inv(\varphi_j,T\alpha\varepsilon')$ defined in \cref{lemma:com_mat_inv} to the registers $(\mathsf{F}, \mathsf{Q}, \mathsf{I})$.
\end{enumerate}

The overall quantum communication complexity of the variable-time algorithm above is as follows.

\begin{theorem}\label{theorem:qc-ols}
Suppose $\mathbf{b}$ is spanned by the left singular vectors of $A$. For \cref{setting:ls}, the referee can prepare an $\varepsilon$-approximation of $\ket{\mathbf{x}_\mathrm{opt}}$ with 
\begin{equation}
O\left(\frac{r^{1.5}\|A\|}{\delta}\log(mn)\log\frac{r\|A\|}{\delta\varepsilon}\log\frac{r\|A\|}{\delta}\right)
\end{equation}
qubits of communication, where $\ket{\mathbf{x}_\mathrm{opt}}=\mathbf{x}_\mathrm{opt}/\|\mathbf{x}_\mathrm{opt}\|$, and $\mathbf{x}_\mathrm{opt}$ is defined as \cref{eq:x_opt}.
\end{theorem}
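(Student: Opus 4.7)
The plan is to apply \cref{theorem:VTAA} to the variable-time algorithm $\mathcal{A}=\mathcal{A}_T\cdots\mathcal{A}_1$ described above. First I would bound the communication cost of each $\mathcal{A}_j$ by adding the cost of $GPE(\varphi_j,\varepsilon')$ from \cref{theorem:GPE_com} and the cost of $Inv(\varphi_j,T\alpha\varepsilon')$ from \cref{lemma:com_mat_inv}; with $\varphi_j=2^{-j}$ and $\varepsilon'=\varepsilon/(T\alpha_\mathrm{max})$, both are $O(r\cdot 2^j\log(mn)\log(r\|A\|/(\delta\varepsilon)))$. Summing the geometric series yields $t_j=O(r\cdot 2^j\log(mn)\log(r\|A\|/(\delta\varepsilon)))$, and at $j=T=\lceil\log(\alpha/\delta)\rceil+1$ this gives $t_\mathrm{max}=O((r\alpha/\delta)\log(mn)\log(r\|A\|/(\delta\varepsilon)))$; using $\|A_i\|\leq\|A\|$ to bound $\alpha\leq\sqrt{r}\|A\|$, this already carries the dominant factor $O(r^{1.5}\|A\|/\delta\cdot\log(mn)\log(r\|A\|/(\delta\varepsilon)))$ in the claimed bound.

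Next I would analyse the state evolution on the $\mathsf{I}$ register to extract $t_\mathrm{avg}$ and $p_\mathrm{succ}$. Writing $\ket{\mathbf{b}}=\sum_k\beta_k\ket{x_k}$ in the left-singular-vector basis of $A=\sum_k\sigma_k\ket{x_k}\bra{y_k}$, step $j$ of $\mathcal{A}$ marks, up to $O(\varepsilon')$ error, those amplitudes whose normalized singular value first falls into the dyadic bucket $[\varphi_j,\varphi_{j-1})$, giving $p_j\approx\sum_{k:\sigma_k/\alpha\in[\varphi_j,\varphi_{j-1})}|\beta_k|^2$. Substituting into $t_\mathrm{avg}^2=\sum_j p_jt_j^2$ and using $2^j=\Theta(\alpha/\sigma_k)$ on each bucket yields $t_\mathrm{avg}^2=O\bigl(r^2\log^2(\cdots)\sum_k|\beta_k|^2(\alpha/\sigma_k)^2\bigr)$, which collapses to $t_\mathrm{avg}=O(t_\mathrm{max})$ once we use $\sigma_k\geq\delta$ and $\sum_k|\beta_k|^2=1$. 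For the success amplitude, the $Inv$ step places the good-flag branch at amplitude $(1/a_\mathrm{max})\|f(A)\ket{\mathbf{b}}\|$; using the isometric normalization implicit in \cref{lemma:com_mat_inv} together with $\mathbf{b}\in\mathrm{col}(A)$ (so $\gamma=1$), I will argue $p_\mathrm{succ}=\Omega(1)$. Plugging everything into \cref{theorem:VTAA} with $T_U=O(r\log n)$ from \cref{prop:com_b}, $p_\mathrm{prep}=\Theta(1)$, and $\log(t'_\mathrm{max})=O(\log(r\|A\|/\delta))$, the term $(t_\mathrm{avg}+T_U/\sqrt{p_\mathrm{prep}})\log(t'_\mathrm{max})/\sqrt{p_\mathrm{succ}}$ dominates and evaluates to $O(t_\mathrm{max}\log(r\|A\|/\delta))$, which is exactly the claimed $O(r^{1.5}\|A\|/\delta\cdot\log(mn)\log(r\|A\|/(\delta\varepsilon))\log(r\|A\|/\delta))$. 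End-to-end accuracy is tracked by the choice $\varepsilon'=\varepsilon/(T\alpha_\mathrm{max})$, so the $O(T)$ sequential sub-algorithms contribute $O(\varepsilon)$ total deviation from the ideal $A^+$ action.

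The hard part is the bucket analysis in the second paragraph. Formally justifying $p_j\approx\sum_{k:\sigma_k/\alpha\in[\varphi_j,\varphi_{j-1})}|\beta_k|^2$ requires careful bookkeeping of the GPE error near the bucket boundaries, since \cref{theorem:BM_GPE} guarantees correctness only on the ``gapped'' intervals $[\varphi,1)$ and $[-\varphi/\rho,\varphi/\rho]$ and is indeterminate between them; one has to show that amplitude that is not cleanly assigned to a single bucket is picked up by an adjacent one with only $O(\varepsilon')$ leakage, so that neither the $t_\mathrm{avg}$ estimate nor the telescoping of the ``controlled on $\ket{0}^{\otimes j-1}$'' clock structure is corrupted. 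The matching lower bound $p_\mathrm{succ}=\Omega(1)$ must likewise be derived from the $Inv$ normalization rather than asserted. The remaining geometric-series summation, the reduction $\alpha\leq\sqrt{r}\|A\|$, and the plug-in to \cref{theorem:VTAA} are routine.
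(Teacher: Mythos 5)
There is a genuine gap in your treatment of $p_\mathrm{succ}$, and it is exactly the point where the VTAA analysis has to be done carefully. You bound $t_\mathrm{avg}=O(t_\mathrm{max})$ and then claim $p_\mathrm{succ}=\Omega(1)$ from $\mathbf{b}\in\mathrm{col}(A)$; but in \cref{lemma:com_mat_inv} the good-flag amplitude is $(1/a_\mathrm{max})\|f(A)\ket{\mathbf{b}}\|$ with a \emph{single} normalization constant $a_\mathrm{max}$ that is independent of the branch (it must be, since the post-selected state has to be proportional to $A^+\ket{\mathbf{b}}$, so every singular component $\beta_k/\sigma_k$ is scaled by the same $1/a_\mathrm{max}$, and $a_\mathrm{max}$ must accommodate the worst case $\sigma_k\approx\delta$). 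Hence $\sqrt{p_\mathrm{succ}}=\Theta\bigl(\|A^+\ket{\mathbf{b}}\|/a_\mathrm{max}\bigr)=\Omega\bigl(\delta\sqrt{\sum_k|\beta_k|^2/\sigma_k^2}\,\bigr)$, which is \emph{not} $\Omega(1)$ in general even when $\gamma=1$: take $\ket{\mathbf{b}}=\ket{x_1}$ with $\sigma_1=\|A\|$ and you get $p_\mathrm{succ}\approx(\delta/\|A\|)^2=1/\kappa^2$. With the correct $p_\mathrm{succ}$, your coarse bound $t_\mathrm{avg}=O(t_\mathrm{max})$ makes the dominant VTAA term $t_\mathrm{max}\log(t'_\mathrm{max})/\sqrt{p_\mathrm{succ}}$ larger than the claimed bound by a factor of order $\|A\|/\delta$, so your argument as written does not reach the theorem; the two errors happen to offset numerically only because each is wrong in the opposite direction.

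The fix is the one the paper uses, and you actually compute the needed quantity before discarding it: keep the fine-grained bucket estimate $t_\mathrm{avg}=O\bigl(r\log(mn)\log\tfrac{\alpha}{\delta\varepsilon'}\cdot\alpha\sqrt{\sum_k|\beta_k|^2/\sigma_k^2}\,\bigr)$ rather than collapsing it via $\sigma_k\geq\delta$, and pair it with $\sqrt{p_\mathrm{succ}}=\Omega\bigl(\delta\sqrt{\sum_k|\beta_k|^2/\sigma_k^2}\,\bigr)$ so that the input-dependent factor $\sqrt{\sum_k|\beta_k|^2/\sigma_k^2}$ cancels in the ratio $t_\mathrm{avg}/\sqrt{p_\mathrm{succ}}=O\bigl(\tfrac{\alpha r}{\delta}\log(mn)\log\tfrac{\alpha}{\delta\varepsilon'}\bigr)$. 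That cancellation, not a constant success probability, is what yields the single factor $r^{1.5}\|A\|/\delta$ after $\alpha=O(\sqrt{r}\|A\|)$ and $\log(t'_\mathrm{max})=O(\log\tfrac{r\|A\|}{\delta})$ are inserted into \cref{theorem:VTAA}. The rest of your outline (per-step costs of $GPE$ and $Inv$, the geometric sum giving $t_j$, $t_\mathrm{max}$, $T_U=O(r\log n)$, and the $\varepsilon'=\varepsilon/(T\alpha_\mathrm{max})$ error budget) matches the paper's proof; the bucket bookkeeping you flag as the ``hard part'' is indeed where the paper's $p_j$ and $t_\mathrm{avg}$ come from, but the decisive step you are missing is carrying the $\sqrt{\sum_k|\beta_k|^2/\sigma_k^2}$ dependence through both $t_\mathrm{avg}$ and $p_\mathrm{succ}$.
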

\begin{proof}
Let the singular value decomposition of $A$ be $A=\sum_{k}\sigma_k\ket{x_k}\bra{y_k}$.
From \cref{theorem:GPE_com}, the quantum communication complexity of $GPE(\varphi_j,\varepsilon')$ is 
\begin{equation}
O\left(\frac{r}{\varphi_j}\log (mn)\log \frac{1}{\varepsilon'}\right)=O\left(2^jr\log (mn)\log \frac{1}{\varepsilon'}\right).
\end{equation}
From \cref{lemma:com_mat_inv}, the quantum communication complexity of $Inv(\varphi_j,T\alpha\varepsilon')$ is 
\begin{align}
O\left(\frac{r}{\varphi_j}\log (mn)\log \frac{1}{\varphi_j\varepsilon'}\right)=O\left(2^jr\log (mn)\log \frac{2^j}{\varepsilon'}\right).
\end{align}
Thus, the quantum communication complexity of each subalgorithm $\mathcal{A}_j$ is
\begin{equation}
O\left(2^jr\log (mn)\log \frac{2^j}{\varepsilon'}\right),
\end{equation}
and the quantum communication complexity of the sequence of subalgorithms $\mathcal{A}_j\cdots\mathcal{A}_1$ is
\begin{equation}
t_j=O\left(2^jr\log (mn)\log \frac{\alpha}{\delta\varepsilon'}\right).
\end{equation}
The relevant quantities for the VTAA complexity are as follows: \\
The maximum communication complexity of a single VTAA iteration is
\begin{equation}
    T_\mathrm{max}=O\left(\frac{\alpha r}{\delta}\log (mn)\log \frac{\alpha}{\delta\varepsilon'}\right).
\end{equation}
The average communication complexity is
\begin{equation}
    t_\mathrm{avg}=O\left(r\log(mn)\log\frac{\alpha}{\delta\varepsilon'}\sqrt{\sum_{k}\frac{|\beta_k|^2}{\sigma_k^2}}\right).
\end{equation}
The square root of the success probability is
\begin{align}
    \sqrt{p_\mathrm{succ}}&=\frac{1}{\alpha_\mathrm{max}}\sqrt{\sum_k\frac{|\beta_k|^2}{\sigma_k^2}}+O(T\varepsilon')\\
    &=\Omega\left(\delta\sqrt{\sum_k\frac{|\beta_k|^2}{\sigma_k^2}}\right).
\end{align}
Also,
\begin{equation}
    T'_\mathrm{max}=\frac{2T_\mathrm{max}}{t_1}=O\left(\frac{\alpha}{\delta}\right).
\end{equation}
Let $T_\mathbf{b}$ be the communication cost of preparing the state $\ket{\mathbf{b}}$, which is $O(r\log n)$ from \cref{prop:com_b}.

Finally, from \cref{theorem:VTAA,prop:com_b}, the total quantum communication complexity for the entire VTAA algorithm is
\begin{align}
    &(T_\mathrm{max}+T_\mathbf{b})\sqrt{\log(T'_\mathrm{max})}+\frac{(t_\mathrm{avg}+T_\mathbf{b})\log(T'_\mathrm{max})}{\sqrt{p_\mathrm{succ}}}\\
    &=O\left(\frac{\alpha r}{\delta}\log(mn)\log\frac{\alpha}{\delta\varepsilon}\log\frac{\alpha}{\delta}\right)\\
    &=O\left(\frac{r^{1.5}\|A\|}{\delta}\log(mn)\log\frac{r\|A\|}{\delta\varepsilon}\log\frac{r\|A\|}{\delta}\right).
\end{align}
The last equality follows from $\alpha=\sqrt{\sum_i\|A_i\|}=O(\sqrt{r}\|A||)$.
\end{proof}

The following corollary describes the quantum communication complexity of linear regression for a general input vector $\mathbf{b}$.

\begin{corollary}[Quantum Communication Complexity of Least Squares Regression Protocol]\label{corollary:qc-ols_gamma}
For $\mathbf{x}_\mathrm{opt}$ defined as \cref{eq:x_opt}, let $\ket{\mathbf{x}_\mathrm{opt}}=\mathbf{x}_\mathrm{opt}/\|\mathbf{x}_\mathrm{opt}\|$.
For \cref{setting:ls}, the referee can prepare an $\varepsilon$-approximation of $\ket{\mathbf{x}_\mathrm{opt}}$ with 
\begin{equation}
    O\left(\frac{r^{1.5}\|A\|}{\delta\gamma}\log(mn)\log\frac{r\|A\|}{\delta\varepsilon}\log\frac{r\|A\|}{\delta}\right)
\end{equation}
qubits of communication, where $\gamma=\|\Pi_{\mathrm{col}(A)}\ket{\mathbf{b}}\|$ and $\Pi_{\mathrm{col}(A)}$ is a projection operator representing the projection onto the column space of $A$.
\end{corollary}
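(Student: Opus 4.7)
The plan is to redo the VTAA analysis of \cref{theorem:qc-ols} with a success probability updated to reflect the portion of $\ket{\mathbf{b}}$ that lies inside $\mathrm{col}(A)$. Decompose $\ket{\mathbf{b}} = \gamma \ket{\mathbf{b}_\parallel} + \sqrt{1-\gamma^2}\,\ket{\mathbf{b}_\perp}$, where $\ket{\mathbf{b}_\parallel}$ is the unit vector in $\mathrm{col}(A)$, hence in the span of the left singular vectors of $A$ with nonzero singular value, and $\ket{\mathbf{b}_\perp}\in \mathrm{col}(A)^\perp$. Since $A^+$ annihilates the orthogonal complement, $A^+\mathbf{b} = \gamma\|\mathbf{b}\|\,A^+\ket{\mathbf{b}_\parallel}$, so the target $\ket{\mathbf{x}_\mathrm{opt}}$ coincides with the state that \cref{theorem:qc-ols} produces from input $\ket{\mathbf{b}_\parallel}$.

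Running the same protocol on $\ket{\mathbf{b}}$ and using linearity, the output splits as $\gamma$ times the action of the protocol on $\ket{\mathbf{b}_\parallel}$ (already analyzed in \cref{theorem:qc-ols}) plus $\sqrt{1-\gamma^2}$ times its action on $\ket{\mathbf{b}_\perp}$. For the orthogonal piece every relevant eigenvalue of $\Bar{A}$ is zero, so the gapped phase estimation of \cref{prop:GPE} deterministically marks the clock register as ``small'' at every VTAA level and the conditional $Inv$ step never asserts the good flag; that branch stays confined to the $\ket{0}_\mathsf{F}$ subspace. Consequently the overall VTAA success probability becomes $\sqrt{p_\mathrm{succ}} = (\gamma/\alpha_\mathrm{max})\sqrt{\sum_k |\beta_k|^2/\sigma_k^2} = \Omega(\gamma\delta/\|A\|)$, where the $\beta_k$ are the coefficients of $\ket{\mathbf{b}_\parallel}$ in the left singular basis.

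Plugging this updated bound into \cref{theorem:VTAA} and repeating the arithmetic of the proof of \cref{theorem:qc-ols} term by term, the cancellation $t_\mathrm{avg}/\sqrt{p_\mathrm{succ}} = O((r^{1.5}\|A\|/\delta)\log(mn)\log(r\|A\|/(\delta\varepsilon)))$ still holds independently of $\gamma$, and $t_\mathrm{max}\sqrt{\log t'_\mathrm{max}}$ is likewise $\gamma$-independent; however the $T_\mathbf{b}/\sqrt{p_\mathrm{succ}}$ contribution, with $T_\mathbf{b}=O(r\log n)$ taken from \cref{prop:com_b}, picks up the factor $1/\gamma$ and contributes $O((r\|A\|\log n)/(\delta\gamma))$. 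Summing the three contributions and using $\gamma\le 1$ to dominate the $\gamma$-independent pieces by their $1/\gamma$ counterparts yields the claimed bound $O((r^{1.5}\|A\|/(\delta\gamma))\log(mn)\log(r\|A\|/(\delta\varepsilon))\log(r\|A\|/\delta))$.

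The main obstacle is rigorously justifying the claim that the $\ket{\mathbf{b}_\perp}$ branch never flips the flag, since \cref{lemma:com_mat_inv} only specifies the behaviour of $Inv(\varphi,\cdot)$ on inputs supported on singular values at least $\varphi\alpha$ and leaves its action on the kernel of $A$ unconstrained. A clean workaround is to prepend a single GPE call at threshold $\delta/\alpha$ whose outcome is written onto a separate ``discard'' ancilla; conditioned on that ancilla indicating ``large'', the remaining state is $\ket{\mathbf{b}_\parallel}$ occurring with probability $\gamma^2$, after which the analysis reduces directly to \cref{theorem:qc-ols} with an additional $O(1/\gamma)$ overhead from amplifying the $\gamma^2$ post-selection probability.
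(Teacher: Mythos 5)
Your argument is essentially the paper's: the paper's entire proof of this corollary is the one-line observation that $p_\mathrm{succ}$ from \cref{theorem:qc-ols} is reduced by a factor of $\gamma^2$, which is exactly your core step, and your accounting of how the resulting $1/\gamma$ propagates through the three terms of the VTAA bound is correct. The one place you go beyond the paper --- noting that \cref{lemma:com_mat_inv} leaves $Inv$ unspecified on $\ker A$, so the $\ket{\mathbf{b}_\perp}$ branch could in principle contaminate the good flag, and patching this with a pre-filtering GPE plus $O(1/\gamma)$ amplification --- fills in a detail the paper's proof silently assumes, and your fix is sound.
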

\begin{proof}
This is because the success probability $p_\mathrm{succ}$ from the proof of \cref{theorem:qc-ols} is reduced by a factor of $\gamma^2$.
\end{proof}

The comparison of communication complexities is presented in \cref{table:qc_ls}, where $\kappa$ denotes the condition number of the matrix $A$, and $\gamma=\Pi_{\mathrm{col}(A)}\mathbf{b}/\|\mathbf{b}\|$, where $\Pi_{\mathrm{col}(A)}$ represents the projection operator projecting onto the column space of $A$.
As \cite{MS} reports their protocol's quantum communication complexity in $\tilde{O}$ notation (omitting the polylogarithmic factors), I calculated the full complexity for comparison; a brief proof is provided in the \hyperref[section:appendix]{Appendix}. 

\begin{table}[ht]
    \centering
    \begin{tabular}{c|c}
      Protocol & Quantum Communication Complexity\\ \hline
      This paper   & $O\left(\frac{r^{1.5}\|A\|}{\delta\gamma}\log mn \log \frac{r\|A\|}{\delta\varepsilon}\log \frac{r\|A\|}{\delta}\right)$\\
      \cite{MS}   & $O\left(\frac{r^{1.5}\|A\|}{\delta\gamma}\log mn \log^2 \frac{r \|A\|}{\delta\varepsilon}\log \frac{r\|A\|}{\delta}\right)$ \\
      \cite{MS}   & $\Omega(r\kappa)$ \\
    \end{tabular}
    \caption{Quantum communication complexity of least squares}
    \label{table:qc_ls}
\end{table}

Although \cite{LS} presents the tunable VTAA method, it is not adopted in this work. This is because its requirement for a constant multiplicative estimate of $\|A^+\mathbf{b}\|$ incurs an estimation cost that exceeds the cost of preparing $\ket{\mathbf{x}_\mathrm{opt}}=A^+\mathbf{b}/\|A^+\mathbf{b}\|$ itself.

\section{Quantum L2-Regularized Least Squares Protocol}\label{section:l_2-regularization}
This section details my quantum protocol for the $\ell_2$-regularized linear regression problem in the quantum coordinator model. I also analyze its quantum communication complexity. 

In the previous section, I considered the ordinary least-squares method:
\begin{equation}
    \argmin_{\mathbf{x}}\|A\mathbf{x}-\mathbf{b}\|^2
\end{equation}
However, it is susceptible to issues such as overfitting and multicollinearity.
To address these challenges, I introduce $\ell_2$-regularization, which adds a penalty term to the loss function to stabilize the solution. I study the minimization of the $\ell_2$-regularized objective function, $\mathcal{L}_\mathrm{l2}$, given by 
\begin{equation}\label{eq:l2reg}
    \mathcal{L}_\mathrm{l2}(\mathbf{x})=\|A\mathbf{x}-\mathbf{b}\|^2+\lambda\|L\mathbf{x}\|^2,
\end{equation}
where $\lambda>0$ is a hyperparameter and  $L\in\mathbb{R}^{n\times n}$ is a nonsingular matrix.
When $L=I$, this becomes a special case known as the ridge regression, whose objective function $\mathcal{L}_\mathrm{ridge}$ is defined as \cref{eq:ridge_obj}.

In the following, I describe the protocol for performing $\ell_2$-regularized linear regression in \cref{setting:l2reg}.

The solution $\mathbf{x}_\mathrm{l2}$ satisfies
\begin{equation}
    \mathbf{x}_\mathrm{l2}=(A^TA+\lambda L^TL)^{-1}A^T\mathbf{b}.
\end{equation}
This problem can be equivalently reformulated as an ordinary least squares problem. Let me define the augmented matrix
\begin{equation}
    A_L:=\begin{pmatrix}
        A\\
        \sqrt{\lambda} L
    \end{pmatrix}.
\end{equation}
The pseudo-inverse of $A_L$ is given by:
\begin{equation}
    A_L^+
    =\begin{pmatrix}
    (A^TA+\lambda L^TL)^{-1}A^T & (A^TA+\lambda L^TL)^{-1}\sqrt{\lambda}L^T   
    \end{pmatrix}.
\end{equation}
The solution $\mathbf{x}_\mathrm{l2}$ is then obtained as 
\begin{equation}
\mathbf{x}_\mathrm{l2}=\argmin_\mathbf{x}\left\|A_L\mathbf{x}-\begin{pmatrix}
        \mathbf{b}\\
        \mathbf{0}
    \end{pmatrix}\right\|^2.        
\end{equation}
Therefore, if the augmented matrix $A_L$ can be efficiently implemented, the $\ell_2$-regularized least squares problem can be solved using the same quantum protocol developed for ordinary least squares.

The properties of the augmented matrix $A_L$ are summarized in the following lemma:
\begin{lemma}[{\cite[Lemma 31]{CMP}}]
Let a matrix $A\in\mathbb{R}^{m\times n}$ and the positive definite penalty matrix $L\in\mathbb{R}^{n\times n}$ have spectral norms $\|A\|$ and $\|L\|$, respectively. Let $\lambda>0$. Suppose $\delta_L$ is the minimum singular value of $L$.
Then the nonzero minimum singular value of $A_L$, denoted as $\sigma_\mathrm{min}(A_L)$, is lower bounded by
\begin{equation}
    \sigma_\mathrm{min}(A_L)\geq \sqrt{\lambda}\delta_L.
\end{equation}
\end{lemma}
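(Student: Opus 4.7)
The plan is to use the variational characterization of the minimum singular value together with the block structure of $A_L$. Concretely, I would note that for any unit vector $\mathbf{x}\in\mathbb{R}^n$,
\begin{equation}
\|A_L\mathbf{x}\|^2 = \left\| \begin{pmatrix} A\mathbf{x} \\ \sqrt{\lambda}\, L\mathbf{x} \end{pmatrix} \right\|^2 = \|A\mathbf{x}\|^2 + \lambda\|L\mathbf{x}\|^2,
\end{equation}
since the norm of a stacked vector is the sum of the squared norms of its blocks.

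Next, I would invoke the hypothesis that $\delta_L$ is the minimum singular value of $L$, which gives $\|L\mathbf{x}\|\geq \delta_L\|\mathbf{x}\| = \delta_L$ for every unit $\mathbf{x}$. Dropping the nonnegative term $\|A\mathbf{x}\|^2$ yields
\begin{equation}
\|A_L\mathbf{x}\|^2 \geq \lambda\|L\mathbf{x}\|^2 \geq \lambda\delta_L^2,
\end{equation}
so $\|A_L\mathbf{x}\|\geq \sqrt{\lambda}\,\delta_L$ for all unit $\mathbf{x}$. Taking the infimum over unit vectors and using the min-max characterization $\sigma_{\min}(A_L) = \min_{\|\mathbf{x}\|=1}\|A_L\mathbf{x}\|$ gives the desired bound. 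Since $L$ is nonsingular we have $\delta_L>0$, hence $A_L$ has full column rank and this minimum is indeed the nonzero minimum singular value, so the statement is not vacuous.

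There is no real obstacle: the argument is essentially a one-line application of the variational definition combined with the Pythagorean identity for stacked vectors. The only subtlety worth flagging is confirming that $\sigma_{\min}(A_L)$ refers to the smallest nonzero singular value; this is guaranteed because full column rank of $A_L$ (inherited from nonsingularity of $L$) forces all $n$ singular values to be positive, and the lower bound $\sqrt{\lambda}\,\delta_L$ applies uniformly to the smallest of them.
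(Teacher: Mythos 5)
Your proof is correct: the Pythagorean identity for the stacked vector plus the variational characterization of $\sigma_{\min}$ is exactly the standard argument, and the paper itself simply cites this fact from \cite{CMP} without reproducing a proof. Your remark that nonsingularity of $L$ forces full column rank of $A_L$ (so the bound indeed concerns the smallest, necessarily nonzero, singular value) is a worthwhile clarification and nothing is missing.
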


The referee constructs a block encoding of $A_L$ as follows.
\begin{lemma}[Constructing Block-encoding of $A_L$]\label{lemma:BE_A_L}
For \cref{setting:l2reg}, the referee can construct an $(\sqrt{\alpha^2+\lambda\|L\|^2}, \log r+1, 0)$-block-encoding of $A_L$ with $O(r\log n)$ qubits of communication, where $\alpha = \sqrt{\sum_{i}\|A_i\|^2}$.
\end{lemma}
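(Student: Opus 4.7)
The plan is to extend the distributed block-encoding construction of \cref{prop:qc_be_A} by treating the referee's $\sqrt{\lambda}L$ as an additional ``virtual party'' whose data is already available locally. Since $A_L$ is the row-stack of the $r$ party matrices $A_0,\ldots,A_{r-1}$ together with the extra block $\sqrt{\lambda}L$, and this extra block is known to the referee by \cref{setting:l2reg}, including it in the block-encoding should cost no additional quantum communication; only the normalization will grow from $\alpha$ to $\sqrt{\alpha^2+\lambda\|L\|^2}$.

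Concretely, I would first have the referee locally build a block-encoding $U_L$ of $\sqrt{\lambda}L$ with subnormalization $\sqrt{\lambda}\|L\|$; this step uses no communication because $\lambda$ and $L$ are prior knowledge of the referee. Next, I would mimic the LCU-style row-stack construction underlying \cref{prop:qc_be_A}, enlarging the block-index set from $r$ to $r+1$. The referee prepares a control register in a state proportional to $\sum_{i=0}^{r-1}\|A_i\|\,\ket{i}+\sqrt{\lambda}\|L\|\,\ket{r}$, then conditionally applies each party's block-encoding of $A_i$ (using $O(\log n)$ qubits of communication per party, exactly as in \cref{prop:qc_be_A}) and the local $U_L$ (no communication). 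The control register also doubles as the block-row label of $A_L$, so after unpreparing the state-preparation factor one is left with a unitary whose $(0,0)$-block equals $A_L/\sqrt{\sum_i\|A_i\|^2+\lambda\|L\|^2} = A_L/\sqrt{\alpha^2+\lambda\|L\|^2}$, which is the desired block-encoding.

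For the accounting, only the $r$ party-side applications contribute to communication, each costing $O(\log n)$ qubits, for a total of $O(r\log n)$. On the ancilla side, the block-row register uses $\lceil\log(r+1)\rceil$ qubits, while the internal ancillas of the individual block-encodings of $A_i$ and of the local $U_L$ should fit within the slack already present in the $\log r + 1$ ancilla count of \cref{prop:qc_be_A}, essentially by reinterpreting the ``$+1$'' slot of the base construction so that it covers the referee's virtual block.

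I expect the main (and quite minor) technical obstacle to be precisely this last point: verifying that the new ``virtual party'' slot does not inflate the ancilla count beyond $\log r + 1$. This should be resolved by identifying the additional block-index value with the ``$+1$'' ancilla already present in \cref{prop:qc_be_A}, rather than by introducing a fresh qubit; no genuinely new ideas beyond those of \cref{prop:qc_be_A} are needed.
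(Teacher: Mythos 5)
Your proposal is correct and follows essentially the same route as the paper, whose entire proof is to regard the referee as an additional party holding $\sqrt{\lambda}L$ and then invoke \cref{prop:qc_be_A}; your extra care about the ancilla count is a reasonable (and easily resolved) refinement that the paper glosses over.
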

\begin{proof}
I regard the referee itself as a new party with the matrix $\sqrt{\lambda}L$, and then apply \cref{prop:qc_be_A}.
\end{proof}
Therefore, the quantum communication complexity of the $\ell_2$-regularized least squares protocol is given by the following theorem.
\begin{theorem}[Quantum Communication Complexity of $\ell_2$-Regularized Least Squares Protocol]\label{theorem:qc-l2reg}
For \cref{setting:l2reg}, the referee can prepare an $\varepsilon$-approximation of $\ket{\mathbf{x}_\mathrm{l2}}$ with
\begin{equation}
    O\left(\frac{1}{\gamma_\mathrm{l2}}\left(\frac{r^{1.5}\|A\|}{\sqrt{\lambda}\delta_L}+r\kappa_L\right)\log (mn)\log\frac{r\|A\|\kappa_L}{\varepsilon}\log (r\|A\|\kappa_L)\right) 
\end{equation}
qubits of communication. Here, $\kappa_L$ represents the condition number of $L$, $\delta_L$ is the minimum singular value of $L$, and  $\gamma_\mathrm{l2}=(1-\mathcal{L}_\mathrm{l2}(\mathbf{x}_\mathrm{l2})/\|\mathbf{b}\|^2)^{1/2}$, where $\mathbf{x}_\mathrm{l2}$ is defined as \cref{eq:x_l2}.
\end{theorem}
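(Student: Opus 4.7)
The plan is to reduce the $\ell_2$-regularized problem to an instance of ordinary least squares on the augmented pair $(A_L, \tilde{\mathbf{b}})$, where $\tilde{\mathbf{b}} := (\mathbf{b}^T,\mathbf{0}^T)^T$, and then invoke \cref{corollary:qc-ols_gamma}. The reduction is correct because $\|A_L\mathbf{x}-\tilde{\mathbf{b}}\|^2 = \|A\mathbf{x}-\mathbf{b}\|^2 + \lambda\|L\mathbf{x}\|^2 = \mathcal{L}_\mathrm{l2}(\mathbf{x})$, so the OLS minimizer for $(A_L,\tilde{\mathbf{b}})$ equals $\mathbf{x}_\mathrm{l2}$ and the associated unit states coincide. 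For the setup, \cref{lemma:BE_A_L} supplies the required $(\sqrt{\alpha^2+\lambda\|L\|^2},\log r+1,0)$-block-encoding of $A_L$ with $O(r\log n)$ qubits of communication by treating the referee as an extra party contributing $\sqrt{\lambda}L$; preparing $\tilde{\mathbf{b}}$ just pads the state of \cref{prop:com_b} with zeros, adding no communication.

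Next I would substitute parameters into \cref{corollary:qc-ols_gamma} applied to $(A_L,\tilde{\mathbf{b}})$. The singular-value cutoff becomes $\sqrt{\lambda}\delta_L$ by the stated bound on $\sigma_\mathrm{min}(A_L)$, and the dimension factor becomes $\log((m+n)n) = O(\log(mn))$. The overlap factor is
\begin{equation*}
\|\Pi_{\mathrm{col}(A_L)}\ket{\tilde{\mathbf{b}}}\|^2 \;=\; 1 - \min_{\mathbf{x}}\|A_L\mathbf{x}-\tilde{\mathbf{b}}\|^2/\|\tilde{\mathbf{b}}\|^2 \;=\; 1 - \mathcal{L}_\mathrm{l2}(\mathbf{x}_\mathrm{l2})/\|\mathbf{b}\|^2 \;=\; \gamma_\mathrm{l2}^2,
\end{equation*}
so the $\gamma$ of the corollary becomes $\gamma_\mathrm{l2}$. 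The critical manipulation is on the leading $\alpha r/\delta$ factor that appears \emph{before} the final step $\alpha = O(\sqrt{r}\|A\|)$ in the proof of \cref{theorem:qc-ols}: I would apply the triangle inequality $\sqrt{\alpha^2+\lambda\|L\|^2} \leq \alpha + \sqrt{\lambda}\|L\|$ first and only then bound $\alpha \leq \sqrt{r}\|A\|$, obtaining
\begin{equation*}
\frac{\sqrt{\alpha^2+\lambda\|L\|^2}\,r}{\sqrt{\lambda}\delta_L}
\;\leq\; \frac{\sqrt{r}\|A\|\cdot r}{\sqrt{\lambda}\delta_L} + \frac{r\sqrt{\lambda}\|L\|}{\sqrt{\lambda}\delta_L}
\;=\; \frac{r^{1.5}\|A\|}{\sqrt{\lambda}\delta_L} + r\kappa_L,
\end{equation*}
which produces the $r\kappa_L$ (rather than $r^{1.5}\kappa_L$) second term in the claim. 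The log factors $\log(r\|A_L\|/(\sqrt{\lambda}\delta_L\varepsilon))$ and $\log(r\|A_L\|/(\sqrt{\lambda}\delta_L))$ are in turn absorbed into $\log(r\|A\|\kappa_L/\varepsilon)$ and $\log(r\|A\|\kappa_L)$ under the mild normalizations $\|A\|,\sqrt{\lambda}\|L\|\gtrsim 1$.

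The main obstacle I expect is not the reduction itself but the careful bookkeeping required to extract $r\kappa_L$ rather than the naive $r^{1.5}\kappa_L$: one must refrain from wrapping $A_L$ into a single $\sqrt{r+1}\|A_L\|$ norm bound, and instead keep the block-encoding normalization decomposed into the data-party contribution $\alpha$, which pays $\sqrt{r}$, and the referee contribution $\sqrt{\lambda}\|L\|$, which does not. Propagating this split consistently through the $t_\mathrm{max}$, $t_\mathrm{avg}$, and $\sqrt{p_\mathrm{succ}}$ quantities of the VTAA analysis in \cref{theorem:qc-ols} while verifying that the log factors collapse as claimed is the technically delicate step; once this is settled, the remainder of the argument is an essentially mechanical substitution into \cref{corollary:qc-ols_gamma}.
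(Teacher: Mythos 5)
Your proposal is correct and follows essentially the same route as the paper: reduce to ordinary least squares on the augmented matrix $A_L$, substitute $\sqrt{\alpha^2+\lambda\|L\|^2}$ and $\sqrt{\lambda}\delta_L$ for the $\alpha$ and $\delta$ of \cref{theorem:qc-ols}, and absorb the overlap as a $\gamma_\mathrm{l2}^2$ reduction of $p_\mathrm{succ}$. In fact your write-up is more explicit than the paper's one-line proof, since you spell out the split $\sqrt{\alpha^2+\lambda\|L\|^2}\leq \sqrt{r}\|A\|+\sqrt{\lambda}\|L\|$ that yields the $r\kappa_L$ (rather than $r^{1.5}\kappa_L$) term and the identity $\|\Pi_{\mathrm{col}(A_L)}\ket{\tilde{\mathbf{b}}}\|=\gamma_\mathrm{l2}$, which the paper leaves implicit.
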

\begin{proof}
    It is proved by considering $\sqrt{\alpha^2+\lambda\|L\|^2}$ and $\sqrt{\lambda}\delta_L$ as $\alpha$ and $\delta$ of \cref{theorem:qc-ols}, respectively. Then the success probability $p_\mathrm{succ}$ from the proof of \cref{theorem:qc-ols} is reduced by a factor of $\gamma_\mathrm{l2}^2$.
\end{proof}

Also, the quantum communication complexity of the ridge regression protocol is as follows.
\begin{corollary}[Quantum Communication Complexity of Ridge Regression Protocol]
For \cref{setting:l2reg}, the referee can prepare an $\varepsilon$-approximation of  $\ket{\mathbf{x}_\mathrm{ridge}}$ with
\begin{equation}
O\left(\frac{1}{\gamma_\mathrm{ridge}}\left(\frac{r^{1.5}\|A\| }{\sqrt{\lambda}}+r\right)\log(mn)\log\frac{r\|A\|}{\lambda\varepsilon}\log\frac{r\|A\|}{\lambda}\right)
\end{equation}
qubits of communication. Here, $\mathbf{x}_\mathrm{ridge}$ is the solution to the ridge regression problem, which minimizes $\mathcal{L}_\mathrm{ridge}(\mathbf{x})$ (\cref{eq:ridge_obj}), and $\ket{\mathbf{x}_\mathrm{ridge}}=\mathbf{x}_\mathrm{ridge}/\|\mathbf{x}_\mathrm{ridge}\|$. Additionally, $\gamma_\mathrm{ridge}:=(1-\mathcal{L}_\mathrm{ridge}(\mathbf{x}_\mathrm{ridge})/\|\mathbf{b}\|^2)^{1/2}$.
\end{corollary}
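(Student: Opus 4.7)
The plan is to derive this corollary as the direct $L=I$ specialization of \cref{theorem:qc-l2reg}. Ridge regression is exactly the $\ell_2$-regularized least squares problem with penalty matrix $L=I$, since in that case the augmented objective $\mathcal{L}_\mathrm{l2}(\mathbf{x}) = \|A\mathbf{x}-\mathbf{b}\|^2 + \lambda\|L\mathbf{x}\|^2$ collapses to $\|A\mathbf{x}-\mathbf{b}\|^2 + \lambda\|\mathbf{x}\|^2 = \mathcal{L}_\mathrm{ridge}(\mathbf{x})$. Hence $\mathbf{x}_\mathrm{l2} = \mathbf{x}_\mathrm{ridge}$ and, from the definitions, $\gamma_\mathrm{l2} = \gamma_\mathrm{ridge}$.

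Next I would substitute the trivial spectral data of $L=I$ into the bound of \cref{theorem:qc-l2reg}: namely $\delta_L = 1$ (the minimum singular value of the identity) and $\kappa_L = 1$ (its condition number). Plugging these into
\begin{equation*}
O\!\left(\frac{1}{\gamma_\mathrm{l2}}\Big(\tfrac{r^{1.5}\|A\|}{\sqrt{\lambda}\,\delta_L}+r\kappa_L\Big)\log(mn)\log\tfrac{r\|A\|\kappa_L}{\varepsilon}\log(r\|A\|\kappa_L)\right)
\end{equation*}
immediately gives the prefactor $\frac{1}{\gamma_\mathrm{ridge}}\big(\tfrac{r^{1.5}\|A\|}{\sqrt{\lambda}}+r\big)$ together with the $\log(mn)$ factor, matching the claimed expression.

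The only remaining point is to justify the logarithmic factors $\log\frac{r\|A\|}{\lambda\varepsilon}$ and $\log\frac{r\|A\|}{\lambda}$ in the statement in place of $\log\frac{r\|A\|}{\varepsilon}$ and $\log(r\|A\|)$ that arise from the naive substitution $\kappa_L=1$. I would trace these back to the underlying GPE/inverse-approximation precisions in \cref{theorem:qc-ols} applied to $A_L$: the relevant scale is $\alpha'/\delta'$ where $\alpha' = \sqrt{\alpha^2+\lambda}$ and $\delta' = \sqrt{\lambda}$, so the log arguments become $\log\frac{r\|A_L\|}{\sqrt{\lambda}\,\varepsilon}$ and $\log\frac{r\|A_L\|}{\sqrt{\lambda}}$. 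Absorbing $\|A_L\| \le \|A\|+\sqrt{\lambda}$ and using the bound $1/\sqrt{\lambda} \le \max(1,1/\lambda)$ inside the logarithms (which only costs constants in the big-O) yields the claimed $\log\frac{r\|A\|}{\lambda\varepsilon}$ and $\log\frac{r\|A\|}{\lambda}$ as upper bounds.

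Since every step is a direct substitution into \cref{theorem:qc-l2reg}, there is no real obstacle; the only mild bookkeeping issue is presenting the logarithmic factors in the uniform form used in the corollary, and this is handled by the simple monotone upper bounds described above.
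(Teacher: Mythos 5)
Your proposal is correct and takes the same route as the paper, whose entire proof is the one-line observation that all singular values of $L=I$ equal $1$, i.e.\ $\delta_L=\kappa_L=1$ in \cref{theorem:qc-l2reg}. Your additional reconciliation of the logarithmic factors is actually more careful than the paper's proof; the residual mismatch you noticed (the naive substitution gives $\log\frac{r\|A\|}{\varepsilon}$ while the corollary writes $\log\frac{r\|A\|}{\lambda\varepsilon}$, and your bound $1/\sqrt{\lambda}\leq\max(1,1/\lambda)$ only recovers the stated form when $\lambda\leq 1$) is a looseness in the corollary's statement itself rather than a gap in your argument.
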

\begin{proof}
    This is because the singular values of $L$ are all 1.
\end{proof}

The results are presented in \cref{table:qc_l2_reg}, where $\delta_L$ denotes the minimum singular value of the penalty matrix $L$, $\kappa_L$ denotes the condition number of $L$, and $\gamma_\mathrm{l2}:=(1-\mathcal{L}_\mathrm{l2}(\mathbf{x}_\mathrm{l2})/\|\mathbf{b}\|^2)^{1/2}$, $\gamma_\mathrm{ridge}:=(1-\mathcal{L}_\mathrm{ridge}(\mathbf{x}_\mathrm{ridge})/\|\mathbf{b}\|^2)^{1/2}$.

\begin{table*}[ht]
    \centering
    \begin{tabular}{c|c}
      Regression Type & Quantum Communication Complexity\\ \hline
      \begin{tabular}{c} $\ell_2$-regularized \\ least squares \\ regression \end{tabular}   & $O\left(\frac{1}{\gamma_\mathrm{l2}}\left(\frac{r^{1.5}\|A\|}{\sqrt{\lambda}\delta_L}+r\kappa_L\right)\log (mn)\log\frac{r\|A\|\kappa_L}{\varepsilon}\log (r\|A\|\kappa_L)\right) $\\
      Ridge regression  & $O\left(\frac{1}{\gamma_\mathrm{ridge}}\left(\frac{r^{1.5}\|A\| }{\sqrt{\lambda}}+r\right)\log(mn)\log\frac{r\|A\|}{\lambda\varepsilon}\log\frac{r\|A\|}{\lambda}\right)$\\
    \end{tabular}
    \caption{Quantum communication complexity of $\ell_2$-regularized least squares regression and ridge regression}
    \label{table:qc_l2_reg}
\end{table*}

\section{Conclusion}\label{section:conclusion}
In this work, I improved quantum protocols for ordinary least squares regression in the quantum coordinator model. By adapting the branch marking and branch-marked gapped phase estimation \cite{LS} to the quantum coordinator model, I achieved a quadratic improvement in the dependence on precision of the quantum communication complexity for ordinary least squares. 
Also, I extended my protocol to $\ell_2$-regularized least squares regression and analyzed its quantum communication complexity.

\section{Future Work}\label{section:future_work}
One important avenue for future work is closing the gap with lower bounds. According to \cref{table:qc_ls}, my current ordinary least squares result still exhibits a gap compared to existing theoretical lower bounds \cite{MS}. Future research could explore ways to refine the proposed quantum communications to achieve better efficiency or derive tighter lower bounds specifically for the $\ell_2$-regularization problem within the quantum coordinator model.

Another direction is to develop more efficient quantum methods for estimating $\|A^+\mathbf{b}\|$. Currently, the high cost of estimating $\|A^+\mathbf{b}\|$ limits the feasibility of applying techniques such as Tunable Variable-Time Amplitude Amplification \cite{LS} and Block Preconditioning \cite{LS}, which enhance the efficiency of the least squares method. Lowering the estimation cost of $\|A^+\mathbf{b}\|$ could unlock significant advantages in the communication setting for linear regression.

In this paper, I assume that the referee has prior knowledge of $\delta$, the lower bound of the minimum non-zero singular value of $A$. In more realistic scenarios where $\delta$ is not known in advance, a natural direction is to explore quantum protocols for estimating $\delta$ in a distributed setting.

On $\ell_2$-regularized least squares method, the hyperparameter $\lambda$ and penalty matrix $L$ are determined to perform the least squares method. There is room to consider methods to determine this hyperparameter. \cite{YGW} shows a quantum algorithm to determine a hyperparameter on a single party. This might be helpful when considering protocols with multiple parties.

\section*{Acknowledgements}
The author gratefully acknowledges insightful discussions with Dr. Changpeng Shao and Dr. Shantanav Chakraborty.
The author, S. M., would like to take this opportunity to thank the
``THERS Make New Standards Program for the Next Generation Researchers.''

\bibliographystyle{IEEEtran}
\bibliography{sample}

\section*{Appendix}
\phantomsection 
\label{section:appendix}
\addcontentsline{toc}{section}{Appendix}
Reference \cite{MS} reports the quantum communication complexity of their protocol in $\tilde{O}$ notation, omitting the polylogarithmic factors. In this appendix, I derive a more precise bound including these factors.

The protocol described in \cite{MS} uses the Gapped Phase Estimation technique introduced in \cite{CKS}.
For clarity, I refer to this specific version as CKS-GPE.

\begin{lemma}[CKS-GPE {\cite[Lemma 22]{CKS}}]
Let $U$ be a unitary operator and $\ket{\psi}$ be an eigenstate satisfying $U\ket{\psi}=\exp(i\pi\lambda)\ket{\psi}$ for some $\lambda\in[-1,1]$. Let $\varphi\in(0,1/4]$ and $\varepsilon>0$. Then there exists a unitary transformation $GPE(\varphi,\varepsilon)$ that implements
\begin{equation}
    GPE(\varphi,\varepsilon)\ket{0}_\mathsf{C}\ket{0}_\mathsf{P}\ket{\psi}_\mathsf{I}
    =(\beta_0\ket{0}_\mathsf{C}\ket{\gamma_0}_\mathsf{P}+\beta_1\ket{1}_\mathsf{C}\ket{\gamma_1}_\mathsf{P})\ket{\psi}_\mathsf{I}
\end{equation}
such that
\begin{equation}
    \begin{cases}
        |\beta_1|<\varepsilon & \text{if $0\leq \lambda\leq\varphi$}\\
        |\beta_0|<\varepsilon & \text{if $2\varphi\leq \lambda\leq1$}
    \end{cases}
\end{equation}
using $O((1/\varphi)\log (1/\varepsilon))$ queries to $U$.
\end{lemma}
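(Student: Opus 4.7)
The plan is to implement $GPE(\varphi,\varepsilon)$ by combining standard quantum phase estimation (QPE) on $U$ with a coherent threshold comparator that writes a single decision bit into the flag register $\mathsf{C}$. The key observation is that since $\ket{\psi}$ is an exact eigenstate of $U$, phase estimation acts trivially on the register $\mathsf{I}$, so the output automatically has the tensor-product structure $(\cdot)_{\mathsf{C}\mathsf{P}}\otimes\ket{\psi}_\mathsf{I}$ required by the lemma, and no uncomputation of $\mathsf{I}$ is needed.

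First I would run phase estimation on $U$ with control register $\mathsf{P}$ and target $\mathsf{I}$, using $t=O(\log(1/\varphi))$ phase qubits. Raw QPE produces an estimate $y$ of $\lambda$ satisfying $|y-\lambda|\leq\varphi/2$ with some constant probability, at a cost of $O(2^t)=O(1/\varphi)$ queries to $U$. To boost the failure probability below $\varepsilon$, I would then apply the standard median-of-$O(\log(1/\varepsilon))$ amplification trick, running independent QPE instances into fresh phase registers and coherently computing the median into an auxiliary part of $\mathsf{P}$. This yields a total query complexity of $O((1/\varphi)\log(1/\varepsilon))$, matching the claim.

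Next, controlled on the amplified estimate stored in $\mathsf{P}$, I would use a reversible comparator to flip $\mathsf{C}$ from $\ket{0}$ to $\ket{1}$ whenever the estimate exceeds $3\varphi/2$, and leave it fixed otherwise. For $\lambda\in[0,\varphi]$, the amplified estimate lies below $3\varphi/2$ except on an amplitude of size at most $\varepsilon$, which gives $|\beta_1|<\varepsilon$; by the symmetric argument for $\lambda\in[2\varphi,1]$, we obtain $|\beta_0|<\varepsilon$. The residual states $\ket{\gamma_0},\ket{\gamma_1}$ of $\mathsf{P}$ are whatever conditional superpositions the QPE and comparator leave behind; since the lemma permits them to be arbitrary, no cleanup of $\mathsf{P}$ is required.

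The main obstacle, and essentially the only content beyond the textbook QPE template, is the median-amplification analysis: one must verify that taking the median of $O(\log(1/\varepsilon))$ independent QPE outcomes of individual failure probability bounded by a constant produces a combined estimate whose deviation exceeds $\varphi/2$ with probability at most $\varepsilon$, via a Chernoff-type bound on majority. One also needs a minor technical check that the comparator threshold $3\varphi/2$ (strictly between the two regimes $\varphi$ and $2\varphi$) gives the required one-sided bounds in both cases, exploiting the gap of width $\varphi$ between the two intervals. Once these two concentration facts are in place, the rest of the argument is just algebraic bookkeeping on the amplitudes.
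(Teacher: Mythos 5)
Your route is the same one the paper relies on: the paper does not prove this lemma itself but cites \cite[Lemma 22]{CKS} and notes only that it is built from the phase estimation algorithm of \cite[Appendix C]{CEMM} plus a majority-voting (median) amplification step, which is exactly your QPE-plus-median-plus-comparator construction, and your query count $O((1/\varphi)\log(1/\varepsilon))$ matches.

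Two technical points in your write-up need repair, though both are fixable without changing the structure. First, the comparator as you state it (``flip $\mathsf{C}$ whenever the estimate exceeds $3\varphi/2$'') ignores phase wrap-around: the eigenphase is $\pi\lambda$ with $\lambda\in[-1,1]$, and QPE estimates it modulo a full turn, so for $\lambda\in[0,\varphi]$ with $\lambda<\varphi/2$ a perfectly accurate estimate can land just below the top of the range (i.e.\ appear as a value near the wrap point rather than near $0$); a naive one-sided threshold would then flip the flag with constant amplitude, destroying the claimed bound $|\beta_1|<\varepsilon$. The comparator must test circular distance from $0$ (equivalently, accept both a window near $0$ and the mirrored window near the wrap point), which is how the CKS construction handles it. Second, the lemma bounds \emph{amplitudes} $|\beta_0|,|\beta_1|$, while your Chernoff/median argument bounds a failure \emph{probability}; to get amplitude at most $\varepsilon$ you need the bad-branch probability at most $\varepsilon^2$, i.e.\ the median should be over $O(\log(1/\varepsilon^2))=O(\log(1/\varepsilon))$ repetitions. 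This costs only a constant factor, so the stated complexity is unaffected, but the bookkeeping should be done at the amplitude level since the comparator is applied coherently.
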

CKS-GPE \cite{CKS} is based on the phase estimation algorithm from \cite[Appendix C]{CEMM} and majority voting procedure.

To execute the phase estimation algorithm, a block-encoding of $\exp(i\pi \Bar{A})$ is required. The following protocol implements this block-encoding.
\begin{lemma}[Hamiltonian Simulation for Quantum Coordinator Model, \cite{MS}]\label{lemma:HS_qcm}
For \cref{setting:ls}, $(1,\log r+3,\varepsilon)$ block-encoding of $\exp(i\pi \Bar{A}/\alpha)$ can be implemented with $O(r\log mn\log (1/\varepsilon))$ qubits of communication.
\end{lemma}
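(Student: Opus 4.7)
The plan is to reduce the problem to standard QSP-based Hamiltonian simulation applied to the block-encoding of $\Bar{A}$ that the referee has already been shown to be able to construct. Concretely, I would first invoke \cref{proposition:qc_be_A_Herm} to let the referee build an $(\alpha,\log r+1,0)$-block-encoding $U_{\Bar{A}}$ of $\Bar{A}$ at a one-time communication cost of $O(r\log(mn))$ qubits. Since the target is $\exp(i\pi \Bar{A}/\alpha)$, rescaling by $\alpha$ turns this into simulating the normalized Hamiltonian $\Bar{A}/\alpha$ (with $\|\Bar{A}/\alpha\|\le 1$) for effective time $t=\pi$.

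Next I would apply the optimal QSP-based Hamiltonian simulation (Low--Chuang / GSLW) to $U_{\Bar{A}}$. The standard statement is that given an $(\alpha,a,0)$-block-encoding of a Hermitian $H$, one can implement a $(1,a+2,\varepsilon)$-block-encoding of $e^{iHt/\alpha \cdot \alpha} = e^{iHt}$ using $O(\alpha t+\log(1/\varepsilon))$ queries to the block-encoding. With $H=\Bar{A}$ and $t=\pi/\alpha$, the effective $\alpha t$ parameter is the constant $\pi$, so the query count collapses to $O(\log(1/\varepsilon))$. The two extra ancillas introduced by the QSP construction, combined with the $\log r+1$ ancillas already present in the block-encoding of $\Bar{A}$, give the claimed $\log r+3$ ancilla qubits. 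The polynomial-approximation error for the Jacobi--Anger expansion of $\cos(\cdot)$ and $\sin(\cdot)$, truncated and processed through QSP, yields the $\varepsilon$ error on the block-encoding.

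The communication accounting is then mechanical: each query to $U_{\Bar{A}}$ (or its inverse/controlled version) costs $O(r\log(mn))$ qubits by \cref{proposition:qc_be_A_Herm}, while the additional QSP phases, Hadamards, and $R_Z$ rotations needed by the Low--Chuang construction are performed locally by the referee and contribute nothing to the communication cost. Multiplying the per-query cost by the $O(\log(1/\varepsilon))$ queries gives the stated total of $O(r\log(mn)\log(1/\varepsilon))$ qubits of communication.

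The only non-routine point I expect to check carefully is that the QSP Hamiltonian-simulation circuit can indeed be arranged so that every interaction with the environment factors through the referee's queries to $U_{\Bar{A}}$ (and its controlled and inverse versions). This is essentially the same observation as in the proof of \cref{prop:QSP_com}: the phase-sequence and ancilla operations are local to the referee, so once the block-encoding of $\Bar{A}$ is communication-costed, no further communication is incurred by the simulation wrapper, and the $\log r+3$ ancillas remain on the referee's side.
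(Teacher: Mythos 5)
The paper states this lemma without proof, importing it from \cite{MS}; your reconstruction --- block-encode $\Bar{A}$ via \cref{proposition:qc_be_A_Herm} at $O(r\log(mn))$ qubits per query, then apply optimal QSP/qubitization Hamiltonian simulation at effective time $\alpha t=\pi$ so only $O(\log(1/\varepsilon))$ queries are needed, with the two extra QSP ancillas giving $\log r+3$ --- is the standard argument and matches every stated parameter. Your closing concern about controlled and inverse queries is resolved exactly as in \cref{prop:QSP_com}: the controls, phases, and ancilla operations are local to the referee, so each query still costs $O(r\log(mn))$ qubits and the total is $O(r\log(mn)\log(1/\varepsilon))$.
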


Then, the quantum communication complexity of the GPE protocol according to \cite{MS} is as follows.
\begin{proposition}[Quantum Communication Complexity of the CKS-GPE Protocol in the Quantum Coordinator Model (Refinement of \cite{MS})]
For \cref{setting:ls}, let $\varphi\in(0,1/4]$, $\varepsilon>0$, and $A=\sum_k\sigma_k\ket{x_k}\bra{y_k}$. Then, there exists a quantum protocol that implements the unitary $GPE_{CKS}(\varphi, \varepsilon)$ such that for all $k$:
\begin{equation}
    GPE_{CKS}(\varphi,\varepsilon)\ket{0}_\mathsf{C}\ket{0}_\mathsf{P}\ket{x_k}_\mathsf{I}=\alpha_0\ket{0}_\mathsf{C}\ket{g_0}_\mathsf{P}\ket{x_k}_\mathsf{I}+\alpha_1\ket{1}_\mathsf{C}\ket{g_1}_\mathsf{P}\ket{x_k}_\mathsf{I},
\end{equation}
\begin{equation}
\begin{cases}
        |\alpha_1|<\varepsilon & \text{if  $\sigma_k/\alpha<\varphi$}\\
        |\alpha_0|<\varepsilon & \text{if  $2\varphi<\sigma_k/\alpha$}
    \end{cases}    
\end{equation}
with $O((1/\varphi)\log(1/\varphi\varepsilon)\log(1/\varepsilon))$ qubits of communication. Here $\ket{g_0}, \ket{g_1}$ are $O(\log(1/\varepsilon))$-qubit quantum states.
\end{proposition}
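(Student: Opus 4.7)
The plan is to pipeline the CKS-GPE lemma with the coordinator-model Hamiltonian simulation from \cref{lemma:HS_qcm}, choosing the per-query block-encoding precision so that the accumulated error across all queries stays below $\varepsilon$, and then counting the total communication.

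First, I would apply CKS-GPE to the unitary $U:=\exp(i\pi\bar A/\alpha)$. On the invariant subspace attached to each singular-vector pair, $\bar A$ has eigenvalues $\pm\sigma_k$ with eigenvectors given by the $\pm$-branches of $(\ket{x_k},\ket{y_k})$, so $U$ acts as $\exp(\pm i\pi\sigma_k/\alpha)$ on those branches. Selecting the positive branch via an extra sign ancilla makes the phase $\lambda=\sigma_k/\alpha$ lie in $[0,1]$, and the threshold behavior of CKS-GPE with parameter $\varphi$ then distinguishes $\sigma_k/\alpha<\varphi$ from $\sigma_k/\alpha>2\varphi$, matching the $|\alpha_0|,|\alpha_1|<\varepsilon$ conditions in the proposition statement.

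Second, I would fix the precision budget. CKS-GPE issues $N=O((1/\varphi)\log(1/\varepsilon))$ controlled queries to $U$. If each ideal query is replaced by an $\varepsilon''$-accurate block-encoding, the output error is bounded additively by $O(N\varepsilon'')$ in operator norm, so setting $\varepsilon''=\Theta(\varphi\varepsilon/\log(1/\varepsilon))$ keeps the total error within $\varepsilon$ and gives $\log(1/\varepsilon'')=O(\log(1/(\varphi\varepsilon)))$. Then I would collect the communication cost: every operation outside the $N$ queries (majority voting, ancillary rotations, Hadamard and phase gates on the clock register, inverse phase estimation) is local to the referee and costs nothing, while each query to $\exp(i\pi\bar A/\alpha)$ at precision $\varepsilon''$ is implemented via \cref{lemma:HS_qcm} with a per-query communication overhead that scales as $\log(1/\varepsilon'')=O(\log(1/(\varphi\varepsilon)))$. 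Multiplying by $N$ yields the stated $O((1/\varphi)\log(1/\varphi\varepsilon)\log(1/\varepsilon))$ qubits of communication.

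The main obstacle I expect is the error accounting. Because CKS-GPE is built from the Kitaev-style phase estimation of \cite{CEMM} wrapped by a majority-voting procedure, I need to verify that the per-query block-encoding error composes additively across the $N$ queries under all control patterns used by the vote, so that the crude $N\varepsilon''\le\varepsilon$ bound actually holds; a naive application of sub-multiplicativity could easily introduce an extra multiplicative factor from the majority-vote repetitions if the accounting is done at the wrong level. A secondary issue is the two-sided spectrum of $\bar A/\alpha$: since the CKS-GPE flag conditions are one-sided in $\lambda$ but the proposition's flag conditions are one-sided in $\sigma_k/\alpha$, I would precede the CKS-GPE call with a sign-fixing operation on the branch ancilla, and I would need to check that this step introduces no additional queries to the block-encoding and therefore no extra communication.
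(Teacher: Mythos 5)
Your proposal follows essentially the same route as the paper's proof: CKS-GPE is treated as the CEMM-style phase estimation run at accuracy $\varphi$ wrapped in an $O(\log(1/\varepsilon))$-fold majority vote, each query to $\exp(i\pi\bar{A}/\alpha)$ is realized through \cref{lemma:HS_qcm} at a precision whose logarithm is $O(\log(1/(\varphi\varepsilon)))$, and the bound is obtained by multiplying the query count by the per-query communication cost (your bookkeeping over the total query count versus the paper's per-run-times-repetitions accounting is only a cosmetic difference). The sign-fixing step you flag is not part of the paper's argument and is not needed, since the phase-estimation/majority-vote discrimination is even in the phase so both branches of $\ket{x_k}$ acquire the same flag; and even if you did implement it via branch marking, it would add only $O(\log(1/\varepsilon))$ extra queries, which does not affect the stated complexity.
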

\begin{proof}
I apply the phase estimation algorithm described in \cite[Appendix C]{CEMM} with a target accuracy of $\varphi$.  To ensure this, the block-encoding of $\exp(i\pi \Bar{A})$ must be implemented with an error of $o(\varphi\varepsilon)$. According to \cref{lemma:HS_qcm}, a single execution of the phase estimation protocol requires $O((r/\varphi)\log (mn)\log(1/(\varphi\varepsilon)))$ qubits of communication.

A majority vote is performed to achieve an overall accuracy of $\varepsilon$, which requires $O(\log(1/\varepsilon))$ iterations of the phase estimation protocol.
\end{proof}

In my protocol, the CKS-GPE sub-protocol used in \cite{MS} is replaced with the GPE introduced in \cite{LS}. By applying the matrix inversion protocol and the variable-time algorithm in my protocol, I obtain the quantum communication complexity of the protocol from \cite{MS} as follows.
\begin{proposition}[Logarithmic Order of Quantum Communication Complexity of \cite{MS}'s Protocol]
For \cref{setting:ls}, the quantum communication complexity of \cite{MS}'s protocol is
\begin{equation}
    O\left(\frac{r^{1.5}\|A\|}{\delta\gamma}\log (mn) \log^2 \frac{r \|A\|}{\delta\varepsilon}\log \frac{r\|A\|}{\delta}\right)
\end{equation}
\end{proposition}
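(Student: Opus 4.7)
\vspace{0.5em}
\noindent\textbf{Proof proposal.}
The plan is to mirror the VTAA-based analysis of \cref{theorem:qc-ols} (and \cref{corollary:qc-ols_gamma}) and swap out the branch-marked GPE used in our protocol for the CKS-GPE sub-protocol that actually appears in \cite{MS}. Concretely, the protocol from \cite{MS} is also a variable-time algorithm $\mathcal{A}=\mathcal{A}_T\cdots\mathcal{A}_1$ with $T=\lceil\log(\alpha/\delta)\rceil+1$, in which each $\mathcal{A}_j$ performs a gapped phase estimation at precision $\varphi_j=2^{-j}$ with accuracy $\varepsilon'=\varepsilon/(T\alpha_{\max})$, followed by a matrix-inversion step $Inv(\varphi_j,T\alpha\varepsilon')$ conditioned on the clock register. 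The only structural change is that the sub-protocol $GPE$ of \cref{prop:GPE} is replaced by $GPE_{CKS}(\varphi_j,\varepsilon')$, whose quantum communication complexity was quantified in the preceding proposition as
\begin{equation}
O\!\left(\tfrac{r}{\varphi_j}\log(mn)\,\log\tfrac{1}{\varphi_j\varepsilon'}\,\log\tfrac{1}{\varepsilon'}\right).
\end{equation}

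Next I would collect the per-round costs. The inversion step has cost $O((r/\varphi_j)\log(mn)\log(1/(\varphi_j\varepsilon')))$ by \cref{lemma:com_mat_inv}, which is dominated by the CKS-GPE cost because of the additional $\log(1/\varepsilon')$ factor coming from the majority-voting amplification inside CKS-GPE. Hence the cumulative complexity of $\mathcal{A}_j\cdots\mathcal{A}_1$ is
\begin{equation}
t_j=O\!\left(2^{j}\,r\log(mn)\,\log\tfrac{\alpha}{\delta\varepsilon'}\,\log\tfrac{1}{\varepsilon'}\right),
\end{equation}
and the maximum single-round cost is $T_{\max}=O((\alpha r/\delta)\log(mn)\log(\alpha/(\delta\varepsilon'))\log(1/\varepsilon'))$. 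The VTAA quantities $t_{\mathrm{avg}}$, $\sqrt{p_{\mathrm{succ}}}$, and $T'_{\max}=2T_{\max}/t_1=O(\alpha/\delta)$ are computed exactly as in the proof of \cref{theorem:qc-ols}, and the general-$\mathbf{b}$ case introduces the factor $1/\gamma$ exactly as in \cref{corollary:qc-ols_gamma}.

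Plugging into \cref{theorem:VTAA} with $T_\mathbf{b}=O(r\log n)$ from \cref{prop:com_b}, the dominant term is
\begin{equation}
\frac{t_{\mathrm{avg}}\log(T'_{\max})}{\sqrt{p_{\mathrm{succ}}}}
=O\!\left(\frac{r^{1.5}\|A\|}{\delta\gamma}\log(mn)\log\tfrac{r\|A\|}{\delta\varepsilon}\log\tfrac{r\|A\|}{\delta}\cdot\log\tfrac{1}{\varepsilon'}\right),
\end{equation}
where I have used $\alpha=O(\sqrt r\|A\|)$. Since $1/\varepsilon'=T\alpha_{\max}/\varepsilon=O((r\|A\|/\delta)\cdot\log(\alpha/\delta)/\varepsilon)$, we have $\log(1/\varepsilon')=O(\log(r\|A\|/(\delta\varepsilon)))$, and this extra logarithm combines with the existing $\log(r\|A\|/(\delta\varepsilon))$ to produce the claimed $\log^{2}(r\|A\|/(\delta\varepsilon))$ factor.

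The main obstacle I anticipate is purely bookkeeping: one must verify that no other term in the VTAA expression ever beats the $t_{\mathrm{avg}}$ term after the extra $\log(1/\varepsilon')$ factor is inserted, and that simplifying $\log(1/\varepsilon')$, $\log(1/(\varphi_j\varepsilon'))$, and $\log(\alpha/(\delta\varepsilon'))$ all collapse into a single $\log(r\|A\|/(\delta\varepsilon))$ under the parameter choice $\varepsilon'=\varepsilon/(T\alpha_{\max})$. Once these simplifications are checked, the stated bound follows directly without introducing any new technique beyond what already appears in the proof of \cref{theorem:qc-ols}.
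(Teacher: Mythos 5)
Your proposal is correct and follows essentially the same route as the paper, which derives this bound by rerunning the VTAA analysis of \cref{theorem:qc-ols} and \cref{corollary:qc-ols_gamma} with the CKS-GPE sub-protocol (and its extra majority-voting factor $\log(1/\varepsilon')$) in place of the branch-marked GPE; the paper only sketches this in a sentence, and your bookkeeping of where the additional $\log$ factor enters and why it collapses to $\log\frac{r\|A\|}{\delta\varepsilon}$ matches the intended argument.
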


\end{document}